\newcommand{\cL} {$\mathsf{L}$}
\setlist[description]{font=\normalfont\itshape\space}
\newcommandx{\info}[2][1=]{\todo[linecolor=blue,backgroundcolor=blue!25,bordercolor=blue,#1]{#2}}
\newcommand{\cReverseL} {\text{\reflectbox{$\mathsf{L}$}}}
\newcommand{\cReverseGamma}{\raisebox{\depth}{\rotatebox{180}{$\mathsf{L}$}}}
\newcommand{\cGamma}{\reflectbox{\raisebox{\depth}{\rotatebox{180}{$\mathsf{L}$}}}}
\newcommand{\cMLE}{monotone $\mathsf{L}$-embedding }
\newcommand{\cMLG}{monotone $\mathsf{L}$-graph }
\newcommand{\cLE}{$\mathsf{L}$-embedding }
\newcommand{\cLG}{$\mathsf{L}$-graph }
\newcommand{\LLL}{\mathbb{L}}
\author{Abu Reyan Ahmed, Felice De Luca, Sabin Devkota, Alon Efrat,  Md Iqbal Hossain, Stephen Kobourov, Jixian Li,   Sammi Abida Salma, \and Eric Welch}
\authorrunning{Ahmed et al.}
\institute {Department of Computer Science\\ University of Arizona}
\titlerunning{ $\mathsf{L}$-Graphs and Monotone $\mathsf{L}$-Graphs}
\begin{document}
	
	\title{ $\mathsf{L}$-Graphs and Monotone $\mathsf{L}$-Graphs}

	\pagenumbering{arabic}
	\maketitle 
	
	\begin{abstract}
		An $\mathsf{L}$-segment consists of a horizontal and a vertical straight line which form an $\mathsf{L}$. 
In an $\mathsf{L}$-embedding of a graph, each vertex is represented by an $\mathsf{L}$-segment, and two segments intersect each other if and only if the corresponding vertices are adjacent in the graph. If the corner of each $\mathsf{L}$-segment in an $\mathsf{L}$-embedding lies on a straight line, we call it a monotone $\mathsf{L}$-embedding.  In this paper we give a full characterization of monotone $\mathsf{L}$-embeddings by  introducing a new class of graphs which we call ``non-jumping" graphs. We show that a graph admits a monotone $\mathsf{L}$-embedding if and only if the graph is a non-jumping graph. Further, we show that outerplanar graphs, convex bipartite graphs, interval graphs, 3-leaf power graphs, and complete graphs are subclasses of non-jumping graphs. Finally, we show that distance-hereditary  graphs and $k$-leaf power graphs ($k\le 4$) admit $\mathsf{L}$-embeddings.
		
	\end{abstract}

    \section{Introduction}\label{se:introduction}
\label{Introduction}
	
Geometric representations of graphs have been used to reveal intriguing connections between the continuous world of geometry and the discrete world of combinatorial structures. Having a geometric representation is much more than just a way to display a graph, as it reveals underlying structures that can often be described only using geometry.  
A good geometric representation of a graph also leads to algorithmic solutions for purely graph-theoretic questions that, on the surface, do not seem to have anything to do with geometry. Examples of this include rubber band representations in planarity testing~\cite{linial1988}, circle-contact representations in balanced graph partitioning and approximating optimal bisection~\cite{spielman1996}, volume-respecting embeddings in approximation algorithms for graph bandwidth~\cite{Feige1998},  
and orthogonal representations in algorithms for graph connectivity and graph coloring~\cite{lovasz1999}.

In an intersection representation of a graph, vertices are geometric objects (e.g., curves) and edges are realized by intersections (e.g., curve crossings).  
Among the most general types of intersection graphs are \textit{string-graphs},
or graphs that admit a \textit{string representation}, 
in which vertices are represented by arbitrary curves in the plane; see Fig.~\ref{fig:i_graph_ex}(a-b). String-graphs find a practical application in the modeling of integrated thin film RC circuits, where some pairs of conductors in a circuit can cross~\cite{sinden1966topology}.
The class of \textit{$k$-string-graphs} contains the graphs that have a string representation with at most $k$ intersections between two strings, where $k\ge0$.
 Not every graph is a string-graph; for instance, the full subdivision graph of the graph $K_5$ does not have a string representation; see Fig.~\ref{fig:i_graph_ex}(f). 
 
\begin{figure}[tbh]
	\hfil \includegraphics[width=1.05\textwidth]{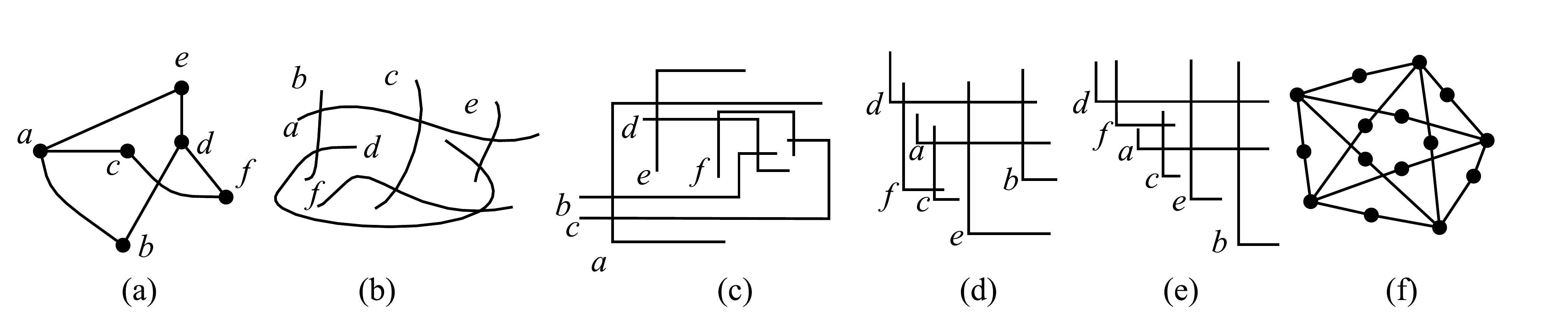} \hfil
	\vspace{-.5cm}\caption{(a) A graph $G$, (b) a string representation of $G$ , and (c) a B$_{2}$-VPG representation of $G$. (d) an \cLE of $G$, (e) a \cMLE of $G$, and (f) a graph that is not an \cLG.}	\label{fig:i_graph_ex}
\end{figure}

Planar graphs are known to be $1$-string graphs~\cite{chalopin2007planar,chalopin2010planar,ehrlich1976intersection}. 
Chalopin and Gon\c{c}alves strengthen this result by proving a conjecture of Scheinerman~\cite{scheinerman1984intersection} that every planar graph has a \textit{segment representation}~\cite{chalopin2009every}, where the segments have arbitrary slopes and intersect at arbitrary angles. The class of segment (\textit{SEG}) graphs is included in the class of $1$-string-graphs.
The recognition of string-graphs is NP-hard~\cite{kratochvil1991string,middendorf1993weakly}.

Another widely-studied class of graphs is the \textit{Vertex Path Grid} (\textit{VPG}) class, introduced by Asinowski \textit{et al.}~\cite{asinowski2011string,asinowski2012vertex}. The class of \textit{$k$-Bends VPG} (\textit{B$_k$-VPG}) graphs restricts the number of bends of the orthogonal paths to $k$, with $k \ge 0$; see Fig.~\ref{fig:i_graph_ex}(c). The class of B$_k$-VPG-graphs is equivalent to the class of string-graphs~\cite{asinowski2011string,asinowski2012vertex}. Chaplick \textit{et al.}~\cite{chaplick2012bendbounded} showed that for every fixed $k$, the recognition of B$_k$-VPG-graph is NP-complete even when the input graph is given by a B$_{k+1}$-VPG representation. The B$_k$-VPG representation is related to the \textit{edge intersection graphs of paths in a grid} (\textit{EPG-graphs}) introduced by Golumbic \textit{et al.}~\cite{golumbic2009edge}. In an EPG representation, the vertices are represented as 
paths on a grid, and two vertices are adjacent if and only if their corresponding paths share a grid edge.
Pergel and Rz\k{a}\.{z}ewski~\cite{pergel2016edge} proved that is NP-complete to recognize  $2$-bend-EPG-graphs.
 
The study of B$_k$-VPG graphs is motivated by practical applications in 
circuit layouts~\cite{brady1990stretching,molitor1991survey}. 
n the knock-knee layout model, the layout may have multiple layers, and on each layer, the vertex intersection graph of paths on a grid is an independent set. This corresponds to a graph coloring problem, and the minimum coloring problem of VPG-graphs defines the knock-knee multiple layout with minimum number of layers. This model is used by  Asinowski \textit{et al.}~\cite{asinowski2012vertex}, who studied VPG-graphs 
and showed that interval graphs and trees are both subfamilies of B$_0$-VPG, and that  \textit{circle graphs} are contained in the class B$_1$-VPG (where circle graphs are string graphs in which the strings are chords of a circle). Since the problem of coloring a circle graph is NP-complete~\cite{garey1980complexity}, it follows that the coloring problem is also NP-complete for B$_1$-VPG-graphs. Asinowski \textit{et al.}~\cite{asinowski2012vertex} proved that the coloring problem remains NP-complete even for B$_0$-VPG-graphs. 
%and so for $B_k$-$VPG$, with $k \ge 0$. 

The class B$_k$-VPG contains all planar graphs, and a central question is how small $k$ can  be. Asinowski \textit{et al.}~\cite{asinowski2012vertex} showed that
every planar graph is a B$_3$-VPG-graph
and Chaplick and Ueckerdt~\cite{chaplick2012planar} showed that every planar graph is a B$_2$-VPG-graph.

In B$_1$-VPG-graphs, four possible \cL-shapes, \cL, \cReverseL, \cGamma~ and \cReverseGamma, are may be used to represent vertices. In an \cL-graph, the vertices are represented with only one of these \cL-shapes. 
Biedl and Derka~\cite{BiedlD15,TBMD16} show that series-parallel graphs, Halin-graphs, and outerplanar graphs are \cL-graphs. 
Felsner \textit{et al.}~\cite{felsner2016intersection} show that every planar $3$-tree is an \cL-graph, and that full subdivisions of planar graphs and line graphs of planar graphs are \cL-graphs. On the other hand, full subdivisions of non-planar graphs are not \cL-graphs~\cite{sinden1966topology}.
    
    In the rest of this paper we restrict our focus to graphs that have an \cL-representation, which we refer to as \cL-graphs. Formally, an $\mathsf{L}$-segment consists of a horizontal and a vertical straight line which form an $\mathsf{L}$.  An {\it $\mathsf{L}$-embedding} is a drawing of $G$ in which each vertex is drawn as an $\mathsf{L}$-segment, and two segments intersect each other if and only if the corresponding vertices are adjacent in the graph. $G$ is an {\it \cLG} if it admits an \cLE. If the corner of each $\mathsf{L}$-segment in an $\mathsf{L}$-embedding lies on a straight line, then it is called  a {\it monotone $\mathsf{L}$-embedding}. A graph is called a {\it \cMLG} if it admits a \cMLE.

	{\bf Our contributions:} We study  \cL-graphs and monotone \cL-graphs and summarize our results as follows:	 
	\begin{itemize}
		\item We introduce a new class of graphs which we call  ``non-jumping graph" and a new vertex labeling which we call ``non-jumping labeling."
		\item We give a full characterization of monotone \cL-graphs by showing that a graph admits a monotone $\mathsf{L}$-embedding if and only if the graph is a non-jumping graph.
		\item 
       We show that given a %labeled
        graph $G$ on $n$ vertices and $m$ edges
        with labeling $\gamma$, there is an $O(n \log n + m) $ time algorithm to determine  whether 
        $\gamma$ is a non-jumping labeling.

		\item We show that outerplanar graphs, convex bipartite graphs, interval graphs, and  complete graphs are subclasses of non-jumping graphs.
		\item We show that distance-hereditary graphs and $k$-leaf power graphs ($k\le 4$) admit $\mathsf{L}$-embeddings.
	 
	\end{itemize}
		
	The rest of the paper is organized as follows. Section~\ref{preliminaries} defines some preliminary graph-theoretic 
	terminology. In Section~\ref{se.nonjumping}, we define a ``non-jumping graph'' and show that  (bull, dart, gem)-free chordal graphs, interval graphs, outerplanar graphs, complete graphs, and  convex bipartite graphs are  non-jumping graphs. We also provide an algorithm to compute a monotone \cL-embedding of a non-jumping graph, and describe some of the properties of non-jumping graphs. In Section~\ref{sec.otherLgraphs}, we show that distance-hereditary graphs and 4-leaf power graphs admit \cL-embeddings. We conclude the  paper with some open problems.

	\section{Preliminaries}
	\label{preliminaries}

	In this section we introduce several definitions. For graph-theoretic definitions not described here, see~\cite{nishizeki2004planar}.
	
		Let $G=(V,E)$ be a graph with a set of vertices $V$  and a set of edges $E$. We say that 
        $G$ is {\it connected} if there is a path between every pair of vertices in $V$. A {\it cycle} of $G$ is a path in which every vertex is reachable from itself. $G$ is a {\it tree} if it does not contain any cycles.
        	$G$ is {\it planar} if it can be embedded in the plane without edge crossings, and {\it outerplanar} if it has a planar drawing in which all vertices of $G$ are placed on the outer face of the drawing. 
              $G$ is {\it bipartite} if its vertices can be partitioned into sets $R$ and $B$  such that every edge connects a vertex in $R$ to one in $B$.  The set of neighbors of $v$ is denoted by $N(v)$.  If a bijective mapping $f:B \rightarrow \{1,2, \hdots , |B|\}$ exists such that for all $r\in R$, and  for any two vertices $x,y\in N(r)$, there does not exist a vertex $z \in B \setminus N(r)$ such that $f(x) < f(z) < f(y)$,  then $G$ is called a {\it convex bipartite} graph.  
            $G$ is called an {\it interval graph}, and a set of intervals $S$ is called an {\it interval representation} of $G$, if there exists a one-to-one correspondence between vertices of $G$ and intervals in $S$, such that $u$ and $v$ are adjacent in $G$, if and only if, their corresponding  intervals intersect. 
        
		Let $G'=(V',E')$ be a graph such that $V'\subseteq V$ and $E'\subseteq E$. Then $G'$ is called a {\it  subgraph} of $G$.	The subgraph $G'$ is an {\it induced subgraph} of $G$ if $E'$ consists of all  the edges in $E$ that have both endpoints in $V'$.  $G$ is a  {\it distance-hereditary graph} if and only if for every pair of vertices $u$, $v$ all induced path between $u$ and $v$ have the same length. 
		
		 $G$ is a {\it $k$-leaf power} graph if there is a tree $T$ whose leaves correspond to the vertices of $G$ in such a way that two vertices are adjacent in $G$ precisely when their distance in $T$ is at most $k$.
		 We say that $G$ is a {\it leaf power graph} if it is a $k$-leaf power for some $k$.
         
         A vertex $u\in V$ is a \textit{pendant} vertex if it has degree 1. For two  vertices $u, v \in V$, if $u$ and $v$ are neighbors and $N(u) \setminus \{v\} = N(v) \setminus\{u\}$, then $u$ and $v$ are called {\it true twins}.  If $u$ and $v$ are not neighbors and  $N(u)=N(v)$, we say that $u$ and $v$ are {\it false twins}.
		
	   A vertex $v$ is called {\it simplicial} in  $G$ if the subgraph of $G$ induced by the vertex set $\{v\} \cup N(v)$ is a complete graph. An ordering of $\{v_1, v_2, \ldots, v_n\}$ is a {\it perfect elimination ordering} of $G$ if each $v_i$ is simplicial in the subgraph induced by the vertices $\{v_1, v_2, \ldots, v_i\}$. $G$ is a {\it  chordal} graph if it has a perfect elimination ordering.
	
			An {\it $\mathsf{L}$-segment } consists of a horizontal and a vertical straight-line segment which together look exactly like an $\mathsf{L}$, with no  rotation. 
	 Let $\mathbb{L}$ be  an  $\mathsf{L}$-embedding of $G$ and  $v$ be a vertex of $G$. We denote  the corresponding  $\mathsf{L}$-segment of $v$ in $\mathbb{L}$  by \cL($v$). The $\mathsf{L}$-segment  is defined by its corner position, the height of its vertical and the width of its horizontal line segments, denoted by  $(v.x, v.y)$, $v.h$, and $v.w$, respectively. Let \cL($u$) and \cL($v$) be two $\mathsf{L}$-segments in $\mathbb{L}$. The segments  \cL($u$) and \cL($v$) might cross each other multiple times in case of overlapping horizontal or vertical segments. 
     In this paper we consider only \cL-embeddings with single crossings, so that if $(u,v)\in E$ then either the horizontal segment of  \cL($v$) crosses the  vertical segment of  \cL($u$), or the vertical segment of  \cL($v$) crosses  the horizontal  segment of  \cL($u$).
    
    \section{Non-jumping graphs }%\todo{check ordering of theorem}
	\label{se.nonjumping}
	In this section we give a formal definition of a non-jumping graph. Then we show that several classes of graphs are non-jumping graphs.   Before we define non-jumping graphs, we must first define a non-jumping labeling of a graph $G=(V,E)$.
	A {\it non-jumping labeling} of $G$ is a vertex labeling  $v_1, v_2, \ldots, v_n$ such that if $(v_i,v_k),(v_j, v_l)$ $\in E$ and $i<j<k<l$, then $(v_j,v_k)\in E$. Figure~\ref{fig:njumpingex}(a) provides an example of a non-jumping labeling. If $G$ admits a non-jumping labeling, then we say that $G$ is a \textit{non-jumping graph}; if $G$ has no non-jumping labeling then $G$ is called a {\it jumping graph}.  If a vertex labeling contains a vertex $v_j$ such that $(v_i,v_k),(v_j, v_l) \in E$ but $(v_j,v_k) \notin E$ (where $i<j<k<l$), then $v_j$ is called a {\it jumping vertex} for $v_i, v_k$, and $v_l$. For example, the vertex $v_3$ is a jumping vertex in the graph shown in Fig.~\ref{fig:njumpingex}(c). Clearly, a non-jumping labeling does not contain any jumping vertex. 
\subsection{Families of non-jumping graphs}

	\begin{figure}[t]
		\hfil \includegraphics[width=1\textwidth]{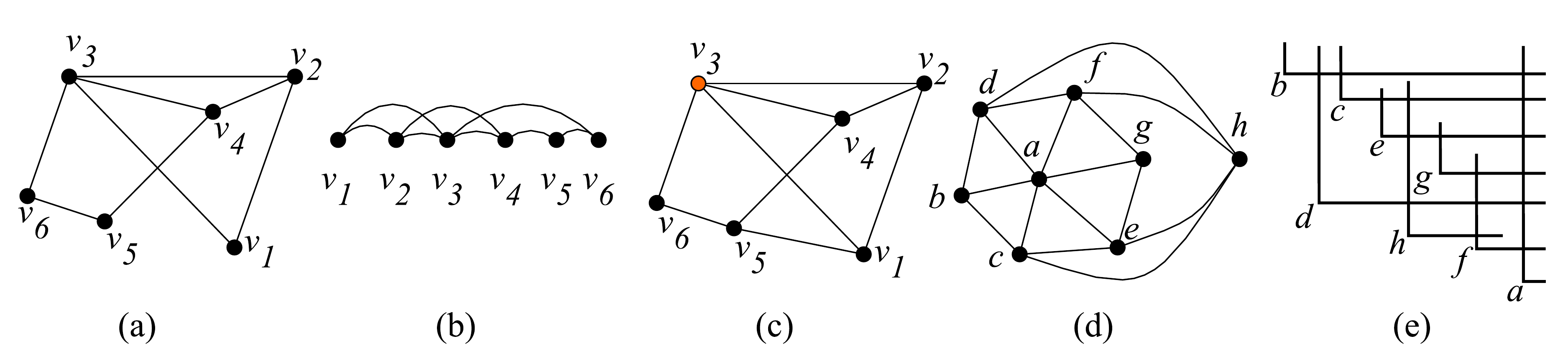} \hfil
		\vspace{-.8cm}\caption{(a) A non-jumping labeling of a graph $G$,  (b) an ordering $\gamma=\{v_1,v_2, v_3, v_4, v_5\}$ of $G$, (c) a jumping vertex $v_3$,  (d) a jumping graph $G'$, and (e) an \cLE  of $G'$. }
		\label{fig:njumpingex}
	\end{figure}

	One can easily verify that paths, cycles, and complete graphs are non-jumping graphs. In this section, we describe several other types of graphs can be classified as non-jumping graphs.  We begin with outerplanar graphs.
    \begin{theorem}
		\label{outerplanar}
		Let $G$ be an outerplanar graph. Then $G$ is a non-jumping graph, and a non-jumping labeling of $G$ can be found in linear time. 
    \end{theorem}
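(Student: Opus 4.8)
The plan is to exploit the fact that the non-jumping condition becomes \emph{vacuous} as soon as the labeling has no two ``interleaving'' edges, i.e.\ no pair of edges $\{v_i,v_k\},\{v_j,v_l\}$ with $i<j<k<l$: if no such pair exists, the hypothesis in the definition of a non-jumping labeling is never met, so the labeling is trivially non-jumping. Thus it suffices to produce, for every outerplanar $G$, a vertex order in which the edges, drawn as chords, pairwise do not cross. First I would dispose of the trivial cases $n\le 2$ (any order works, as no interleaving pair can exist), and reduce to the connected case: if the components $C_1,\dots,C_t$ of $G$ have non-jumping labelings, their concatenation is still non-jumping, because an edge always lies inside a single component, so an interleaving pair $\{v_i,v_k\},\{v_j,v_l\}$ with $i<j<k<l$ would force $v_i,v_k$ and $v_j,v_l$ into the same component (otherwise $k<j$), whence it is handled by that component's labeling.

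For connected $G$ with $n\ge 3$, I would invoke the classical structure theory of outerplanar graphs. Extend $G$ to an edge-maximal outerplanar graph $G^{+}$ on the same vertex set by repeatedly adding edges that keep the graph outerplanar. It is standard that an edge-maximal outerplanar graph on $n\ge 3$ vertices is a triangulation of a convex $n$-gon; in particular $G^{+}$ is $2$-connected and its outer face is bounded by a Hamiltonian cycle $C$. Label $v_1,\dots,v_n$ along $C$, starting at an arbitrary vertex.

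Next I would verify that, with this labeling, no two edges of $G^{+}$ interleave. The edges of $G^{+}$ are the cycle edges $v_iv_{i+1}$ and $v_nv_1$ together with the diagonals of the triangulation. A cycle edge $v_iv_{i+1}$ spans the index set $\{i,i+1\}$, and since no index lies strictly between $i$ and $i+1$ it cannot interleave with anything; likewise nothing can interleave with $v_nv_1$, which spans the full index range $\{1,n\}$. Two diagonals of a polygon triangulation do not cross in the interior of the polygon, which is precisely the statement that they do not interleave in the cyclic (hence linear) order induced by $C$. Hence no two edges of $G^{+}$ interleave, and since $E(G)\subseteq E(G^{+})$, no two edges of $G$ interleave either; therefore $v_1,\dots,v_n$ is a non-jumping labeling of $G$.

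For the running time, I would appeal to a linear-time outerplanarity test and embedding algorithm: from an outerplanar embedding one obtains the triangulated-polygon extension and reads off the Hamiltonian-cycle order in $O(n+m)$ time in the connected case, and the disconnected case is handled by concatenation within the same bound. The point that needs the most care --- the only real obstacle --- is the reduction step asserting that \emph{every} outerplanar graph, including disconnected ones or ones with cut vertices such as trees and stars, embeds as a spanning subgraph of a triangulated polygon; this rests on the classical characterization of edge-maximal outerplanar graphs (and, for disconnected inputs, on first bridging the components while preserving outerplanarity). Everything else is then either vacuous or an immediate consequence.
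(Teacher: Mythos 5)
Your proof is correct and follows essentially the same route as the paper's: both rest on the observation that a vertex order with no interleaving (crossing) pair of edges is vacuously non-jumping, and both obtain such an order from a one-page book embedding of the outerplanar graph. The only difference is that the paper simply cites Bernhart and Kainen for the existence of a linear-time one-page book embedding, whereas you reprove that fact inline via the extension to a maximal outerplanar graph (triangulated polygon) and its Hamiltonian outer cycle, together with an explicit reduction for disconnected inputs.
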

    \begin{proof}

    Every outerplanar graph admits a one page book embedding~\cite{BERNHART1979320} which can be found in linear time.%~\cite{MITCHELL1979229}. 
    In a one page book embedding of a graph, we place each vertex of the graph on the spine of the book and each edge can be drawn on one page without edge crossing.  If we consider the sequence of vertices as a labeling of a one page book embedding, there is no jumping vertex because there is no pair of edges $(v_i,v_k),(v_j, v_l) \in E$ where $i<j<k<l$. \qed
	  \end{proof}
	
	We next show that (bull, dart, gem)-free chordal graphs are non-jumping graphs. The bull, dart and gem are shown in the Fig.~\ref{fig:3-leaf-power-example}(a).  Before the proof, we define a few terms as follows: Let $T$ be a tree, and $v$ be a vertex  of $T$. We denote the subtree of $T$ rooted at $v$ by $T_v$. We denote the parent of $v$ by $v'$,  and  parent of  $v'$ by  $v''$. A vertex $u$ is said to be  an {\it uncle} of $v$ if $u'=v''$.

	\begin{theorem}
		\label{non-jumping_3-leaf}
		Every (bull, dart, gem)-free chordal graph is a non-jumping graph.
	\end{theorem}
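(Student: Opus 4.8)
The plan is to reduce to connected graphs and then exploit a tree structure. First, non-jumping labelings compose under disjoint union: if $G$ is the disjoint union of $G_1,\dots,G_c$, each carrying a non-jumping labeling, then the concatenation of these labelings is non-jumping, since the vertex set of each $G_t$ forms a contiguous block, so in any quadruple $i<j<k<l$ with $(v_i,v_k),(v_j,v_l)\in E$ the edge $(v_i,v_k)$ confines $v_i,v_j,v_k$ to one block and $(v_j,v_l)$ then places $v_l$ in that block too. Since the bull, the dart and the gem are connected, and both chordality and $F$-freeness for connected $F$ are inherited by components, every component of $G$ is again (bull, dart, gem)-free chordal, so it suffices to handle connected $G$. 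For connected $G$ I would use the known characterization that a connected graph is (bull, dart, gem)-free chordal if and only if it is a $3$-leaf power, fixing a tree $T$ whose leaves are the vertices of $G$ with $uv\in E(G)\iff d_T(u,v)\le3$. If $T=K_2$ then $G=K_2$ is trivially non-jumping; otherwise root $T$ at an internal node, so each leaf $v$ has an (internal) parent $v'$ and, unless $v'$ is the root, a grandparent $v''$. Since distinct leaves are at distance at least $2$ in $T$, adjacency is local: distinct leaves $u,v$ are adjacent in $G$ exactly when $u'=v'$, or $u'=v''$ (i.e.\ $u$ is an uncle of $v$), or $v'=u''$ ($v$ is an uncle of $u$).

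Next I would define the labeling by a traversal of $T$ that outputs the leaf-children of each node as one group: starting at the root, upon entering an internal node $t$ first output all leaf-children of $t$ consecutively, in any order, and then recurse into the non-leaf children of $t$, in any order. Write $S(t)$ for the set of leaves of the subtree $T_t$ and $\ell(t)$ for the set of leaf-children of $t$. The two facts I would record and reuse are: (P1) $S(t)$ occupies a contiguous block of the labeling, its prefix is $\ell(t)$, and below $\ell(t)$ the blocks $S(t_1),S(t_2),\dots$ of the non-leaf children of $t$ are pairwise disjoint and occur in the order the children are visited; and (P2) hence, if a leaf lies in $S(t)$ then its parent lies in $T_t$.

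Then I would verify the non-jumping property directly. Assume $i<j<k<l$ with $(v_i,v_k),(v_j,v_l)\in E$; I must show $(v_j,v_k)\in E$, splitting on how $v_i$ and $v_k$ are adjacent. If $v_k$ is an uncle of $v_i$, then $v_k\in\ell(v_i'')$, which by (P1) precedes $S(v_i')\ni v_i$ — impossible since $i<k$. If $v_i'=v_k'=t$, then $v_i,v_k\in\ell(t)$, and as $\ell(t)$ is contiguous and $i<j<k$ we get $v_j\in\ell(t)$, so $v_j'=v_k'$ and $(v_j,v_k)\in E$. In the remaining case $v_i$ is an uncle of $v_k$; put $t=v_i'$ and $s=v_k'$, so $s$ is a non-leaf child of $t$ with $v_k\in S(s)$ and $v_i\in\ell(t)$, whence $v_j\in S(t)$. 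If $v_j'=t$, then $v_j$ is an uncle of $v_k$ (as $v_k''=s'=t=v_j'$), done. If $v_j\in S(s)$, then $v_j$ precedes $v_k\in\ell(s)$ and $\ell(s)$ is the prefix of $S(s)$, so $v_j\in\ell(s)$ and $v_j'=s=v_k'$, done. Otherwise $v_j\in S(t_\alpha)$ for some non-leaf child $t_\alpha\ne s$ of $t$; then by the local adjacency rule together with (P2) one checks that $N_G(v_j)\subseteq S(t_\alpha)\cup\ell(t)$, and by (P1) every leaf of $S(t_\alpha)\cup\ell(t)$ precedes $v_k$ (because $S(t_\alpha)$ precedes $S(s)\ni v_k$, and $\ell(t)$ precedes every $S(\cdot)$-block of $t$), so $v_l\in N_G(v_j)$ precedes $v_k$, contradicting $l>k$.

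The step I expect to be the main obstacle is this last sub-case: showing that the second ``crossing'' edge $(v_j,v_l)$ cannot reach out of the subtree $T_{t_\alpha}$ that already holds $v_j$. It also matters, before any of this, to choose a traversal in which the leaf-children of each node are emitted as a single contiguous group — a plain depth-first order that interleaves leaf-children with whole internal subtrees does not produce a non-jumping labeling. With that traversal in hand, the rest is routine block bookkeeping via (P1) and (P2).
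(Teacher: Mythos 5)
Your proof is correct and follows essentially the same route as the paper's: invoke the characterization of (bull, dart, gem)-free chordal graphs as $3$-leaf powers, order the leaves by a DFS that emits each node's leaf-children as a contiguous block before descending into its internal children (the paper phrases this as sorting children by depth), and then run the same uncle/sibling case analysis, with the final sub-case closed by noting that $v_j$'s neighborhood cannot reach past its subtree and the root's leaf-children (the paper derives the same contradiction via $d_T(v_j,v_l)\ge 4$). Your write-up is somewhat more careful in two spots the paper glosses over — the reduction to connected components and the sub-case $v_j\in S(s)$ — but the underlying argument is the same.
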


\begin{proof}
	Let $G=(V,E)$ be a (bull, dart, gem)-free chordal graph of $n$ vertices. Then there is a tree $T$ whose leaves correspond to the vertices of $G$ such that two vertices are adjacent in $G$ precisely when their distance in $T$ is at most three~\cite{Rautenbach20061456}; see Fig.~\ref{fig:3-leaf-power-example}(b-c). Hence $G$ is a 3-leaf power graph of $T$. 	We use the notation $\overline{v}$ to indicate that the leaf $v$ of $T$ corresponds to the vertex $\overline{v}$ in $G$.    Let $\overline{u}$ and $\overline{v}$ be vertices of $G$.
    
  Since $G$ is a 3-leaf power graph of $T$, $(\overline{u}, \overline{v})\in E$ if and only if $u$ and $v$ are siblings, or $u$ is an uncle of $v$, or   $v$ is an uncle of $u$. We find an ordering of vertices of $G$ using $T$ as follows:
	 We first root $T$ at a non-leaf vertex $x$ of $T$. 
		We then sort each subtree of $T$ rooted at each vertex in counterclockwise order, according to depth in ascending order.  %That means placing of $T_{u_i}$ 

	\begin{figure}[t]
	\hfil \includegraphics[width=0.9\textwidth]{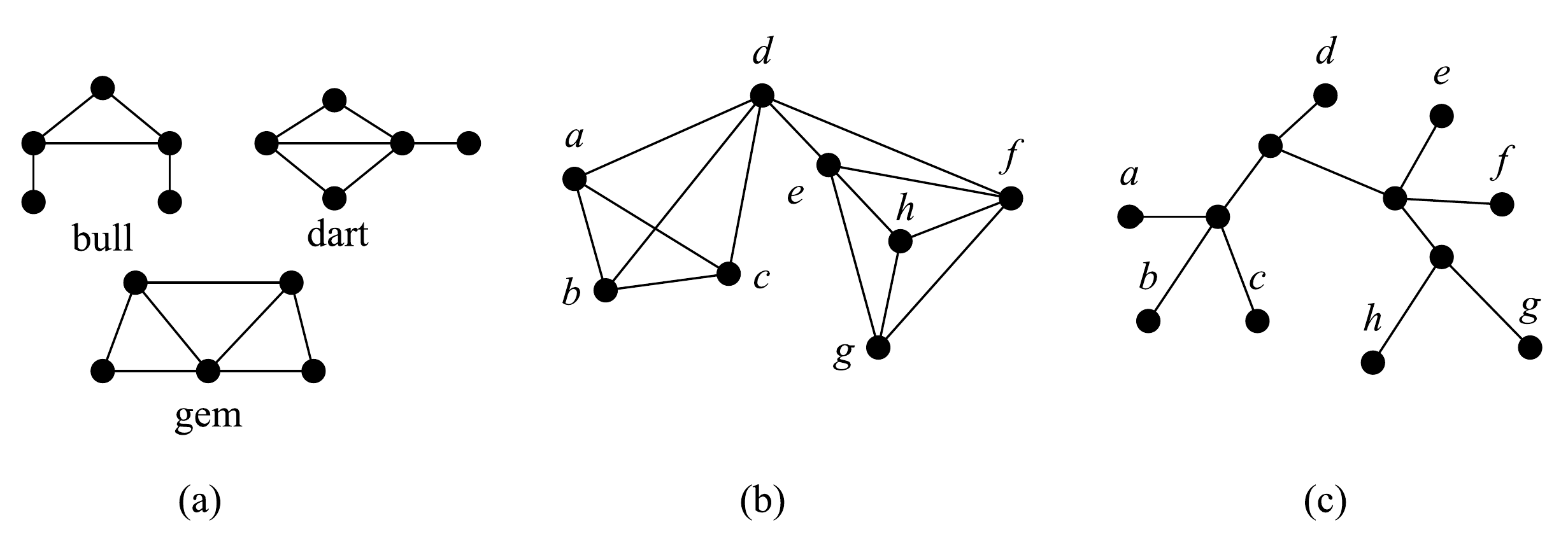} \hfil
	\vspace{-.6cm}\caption{(a) A bull, a dart, a gem. (b) A 3-leaf power graph $G$, and (c) the 3-leaf power tree of $G$.}
	\label{fig:3-leaf-power-example}
\end{figure}
	 Let $\gamma'=\{v_1, v_2, v_3, \ldots, v_n\}$ be the ordering of the leaves taken from the counterclockwise DFS traversal on $T$ starting from $x$.  We now prove that  $\gamma=\{\overline{v_1}, \overline{v_2}, \overline{v_3}$, $\ldots, \overline{v_n}\}$ is a non-jumping labeling of $G$ by supposing that $(\overline{v_i}, \overline{v_k}), (\overline{v_j}, \overline{v_l})\in E$ with $i<j<k<l$, and showing that $(\overline{v_j}, \overline{v_k})\in E$.

	 It is easy to see that $(\overline{v_i}, \overline{v_k})\in E$ if and only if  $v_i$ and  $v_k$ are siblings, or $v_i$ is an uncle of $v_k$. Since we sorted the vertices by their depth before taking the ordering, and since $i<k$,  $v_k$ can not be an uncle of $v_i$. Thus, we have two cases to consider:
	 
	 {\bf Case 1:   $v_i$ and  $v_k$ are siblings.} Since the order was taken from DFS traversal, $v_i, v_{i+1}, \ldots, v_k$ are siblings. So, $v_j$ and $v_k$ are siblings, and we have $(\overline{v_j}, \overline{v_k})\in E$. % This shows that  $\gamma$ is a non-jumping labeling of $G$.
	 
	 {\bf Case 2: $v_i$ is an uncle of $v_k$.}   If $v_i$ and $v_j$ are siblings, then  $(\overline{v_j}, \overline{v_k})\in E$, because the distance between $v_i$ and $v_k$ and the distance between $v_j$ and $v_k$ are the same.  Otherwise, we can prove that $v_j$ and $v_k$ are siblings. Suppose that  $v_j$ and $v_k$ are not siblings. Then $v_j\in T_o$, where $o$ is a non-leaf child of $v'_i$  that was encountered before $v'_k$ in the traversal. Thus, the path between $v_j$ and $v_l$ contains $v'_i$ due to the ordering of the vertices and the positions of $i, j, k,$ and $l$; see Fig.~\ref{fig:3leaf-non-jumping-proof}. 
     %in Appendix).  
     Now, the distance between $v'_i$  and $v_j$ is at least 2, and  the distance between $v'_i$  and $v_l$ is at least 2.  This means that $(\overline{v_j}, \overline{v_l})\notin E$, which is not true. The contradiction shows that $v_j$ and $v_k$ must be siblings and  $(\overline{v_j}, \overline{v_k})\in E$. \qed
	\end{proof}

We now show that every interval graph has a non-jumping labeling.
   
  	\begin{theorem}
		\label{non-jumping_interval}
		Every interval graph is a non-jumping graph.\vspace{-.6cm}
	\end{theorem}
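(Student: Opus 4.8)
The plan is to read a non-jumping labeling directly off an interval representation, by sorting the vertices according to their left endpoints. First I would fix an interval representation $\{I_v = [l_v, r_v] : v \in V\}$ of $G$ — by a standard perturbation argument we may assume all $2n$ endpoints are distinct and the intervals are closed — so that $u$ and $v$ are adjacent precisely when $I_u \cap I_v \neq \emptyset$. Then I would relabel the vertices as $v_1, v_2, \ldots, v_n$ so that $l_{v_1} < l_{v_2} < \cdots < l_{v_n}$. The claim is that this labeling is non-jumping.

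To prove the claim, suppose $(v_i, v_k), (v_j, v_l) \in E$ with $i < j < k < l$; I must show $(v_j, v_k) \in E$, i.e.\ $I_{v_j} \cap I_{v_k} \neq \emptyset$. From $j < k < l$ and the choice of ordering we get $l_{v_j} < l_{v_k} < l_{v_l}$. Since $(v_j, v_l) \in E$, the intervals $I_{v_j}$ and $I_{v_l}$ intersect, and because $l_{v_j} < l_{v_l}$ this is possible only if $l_{v_l} \le r_{v_j}$. Combining these inequalities gives $l_{v_j} < l_{v_k} < l_{v_l} \le r_{v_j}$, so the point $l_{v_k}$ lies in $[l_{v_j}, r_{v_j}] = I_{v_j}$, and of course $l_{v_k} \in I_{v_k}$. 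Hence $l_{v_k} \in I_{v_j} \cap I_{v_k}$, so $(v_j, v_k) \in E$, as needed.

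I do not anticipate a genuine obstacle here: the entire content is the choice of the left-endpoint order together with a two-line endpoint chase. In fact the hypothesis $(v_i, v_k) \in E$ is never used — sorting by left endpoints already forces the stronger property that whenever $v_j$ is adjacent to a later vertex $v_l$ it is adjacent to every intermediate $v_k$ with $j < k < l$. The only point that requires any care is the endpoint convention (open versus closed intervals and ties among endpoints), which is exactly why I would begin by normalizing to distinct endpoints and closed intervals.
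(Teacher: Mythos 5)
Your proof is correct and follows essentially the same route as the paper: sort vertices by left endpoint, then observe that $(v_j,v_l)\in E$ forces $I_{v_j}$ to reach past $l_{v_l}$ and hence to cover $l_{v_k}$. Your observation that the hypothesis $(v_i,v_k)\in E$ is never needed also applies to the paper's own argument, which states the inequality $v_i(b)>v_k(a)$ but never uses it.
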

    	\begin{figure}[!htbp]
		\hfil \includegraphics[width=1\textwidth]{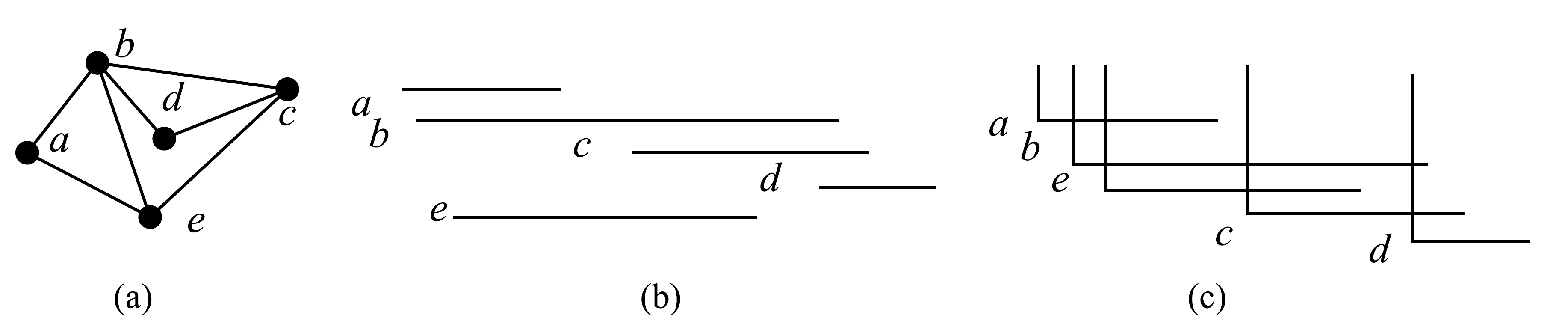} \hfil
		\vspace{-.8cm}\caption{(a) A graph $G$, (b) an interval representation of $G$, and (c) an \cLE of $G$.}
		\label{fig:intervalgraph}
	\end{figure}
	\begin{proof}
		Let $G$ be an interval graph of $n$ vertices. Let $S=\{|a_1,b_1|, |a_2,b_2|, \ldots, |a_n,b_n| \}$ be an interval representation of $G$. We denote the endpoints $a_i$ and $b_i$ of the interval corresponding to $v_i$ by $v_i(a)$ and  $v_i(b)$, respectively. Note that $v_i(a)<v_i(b)$; see Fig.~\ref{fig:intervalgraph}. Let  $\gamma=\{v_1, v_2,\ldots, v_n\}$ be an ordering of vertices of $G$  in non-decreasing order of $v_i(a)$ $(1\le i \le n)$. If $i<j$ then $v_i(a)< v_j(a)$. We now prove that $\gamma$ is a non-jumping labeling of $G$.  By way of contradiction, assume that $\gamma$ is a jumping labeling. Then $\gamma$ contains a jumping vertex $v_j$. By definition, there exists edges $(v_i, v_k),(v_j, v_l) \in E$ with $i<j<k<l$ such that $(v_j,v_k)\notin E$  This means  $v_i(b)> v_k(a)$ and $v_j(b)> v_l(a)$. By construction,   $v_j(b)>v_k(a)$, since  $v_k(a)<v_l(a)$ and  $v_l(a) < v_j(b)$.  Since  $v_j(b)>v_k(a)$ and $v_j(a)<v_k(a)$, we have $(v_j,v_k)\in E$, a contradiction. 
		% and $v_j(b)<v_k(a)$
		\qed
	\end{proof}

	\begin{theorem}
		\label{non-jumping_convex}
		Every convex bipartite graph is a non-jumping graph.
	\end{theorem}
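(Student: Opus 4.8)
The plan is to reuse the interval-graph idea behind Theorem~\ref{non-jumping_interval}. Applied to the given bijection $f\colon B\to\{1,\dots,|B|\}$, the convexity condition says exactly that for every non-isolated $r\in R$ the neighbourhood $N(r)$ is a contiguous block of $B$: writing $\ell(r)=\min_{b\in N(r)}f(b)$ and $u(r)=\max_{b\in N(r)}f(b)$, we have $N(r)=\{b\in B : \ell(r)\le f(b)\le u(r)\}$. So I will treat each such $r$ as the interval $[\ell(r),u(r)]$ and each $b\in B$ as the point $f(b)$, with $r\sim b$ precisely when the point lies in the interval. I then order all the vertices by the key $\kappa(b)=(f(b),1)$ for $b\in B$ and $\kappa(r)=(\ell(r),0)$ for $r\in R$, sorted increasingly in the lexicographic order, breaking the remaining ties --- which occur only among $R$-vertices sharing the same $\ell$ --- arbitrarily; isolated vertices are appended at the end in any order. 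The purpose of the second coordinate is to force an $R$-vertex to precede a $B$-vertex whenever they have the same left endpoint. Let $v_1,\dots,v_n$ be the resulting labelling; I claim it is a non-jumping labelling of $G$.

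For the verification, suppose $i<j<k<l$ with $(v_i,v_k),(v_j,v_l)\in E$. Since $G$ is bipartite with parts $R,B$, each of these two edges has exactly one endpoint in $R$; in particular $v_i,v_j,v_k,v_l$ are all non-isolated, so the appended vertices never occur in such a pattern. This leaves four configurations, according to which endpoint of each edge lies in $R$: (A) $v_i,v_j\in R$ and $v_k,v_l\in B$; (B) $v_i,v_l\in R$ and $v_j,v_k\in B$; (C) $v_j,v_k\in R$ and $v_i,v_l\in B$; (D) $v_k,v_l\in R$ and $v_i,v_j\in B$. I will show that in (A) the edge $(v_j,v_k)$ is forced, and that (B), (C) and (D) cannot occur at all (their hypotheses being internally contradictory); together these establish the non-jumping property.

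The remaining work is pure bookkeeping: translating ``$v_a<v_b$ in the labelling'' and ``$v_a\sim v_b$'' into inequalities among the numbers $f(\cdot),\ell(\cdot),u(\cdot)$. The relevant facts are that a $B$-vertex preceding an $R$-vertex forces a \emph{strict} inequality $f(\cdot)<\ell(\cdot)$ between their left endpoints (whereas an $R$-vertex preceding a $B$-vertex only gives $\le$), that $B$-vertices occur in increasing order of $f$-value, that of two $R$-vertices appearing in order the first has $\ell$ at most that of the second, and that $r\sim b$ holds iff $\ell(r)\le f(b)\le u(r)$. Chasing these through the configurations, (B) yields $\ell(v_l)\le f(v_j)<f(v_k)<\ell(v_l)$, (C) yields $\ell(v_k)\le f(v_i)<\ell(v_j)\le\ell(v_k)$, and (D) yields $\ell(v_k)\le f(v_i)<f(v_j)<\ell(v_k)$, each a contradiction; while in (A), $v_j<v_k$ gives $\ell(v_j)\le f(v_k)$ and $v_k<v_l$ together with $v_j\sim v_l$ gives $f(v_k)<f(v_l)\le u(v_j)$, so $\ell(v_j)\le f(v_k)\le u(v_j)$ and hence $(v_j,v_k)\in E$ by convexity, as required.

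I expect the only real difficulties to be bookkeeping ones. Convexity, rather than mere bipartiteness, is used in exactly one spot --- the identity $N(r)=\{b:\ell(r)\le f(b)\le u(r)\}$, which is what lets a numeric membership certify the edge $(v_j,v_k)$ in case (A); it is also worth checking that the tempting shortcut of invoking Theorem~\ref{non-jumping_interval} directly --- give $r$ the interval $[\ell(r),u(r)]$ and $b$ a tiny interval around $f(b)$ --- fails, because two $R$-vertices with overlapping blocks would wrongly become adjacent. The other delicate point is the tie-break: it must place $R$-vertices before $B$-vertices when left endpoints coincide, since with the opposite convention configuration (C) ceases to be impossible --- for instance with $B=\{b_1,b_2\}$, $R=\{r_1,r_2\}$, $f(b_i)=i$, $N(r_1)=\{b_1,b_2\}$ and $N(r_2)=\{b_1\}$, the reversed tie-break may produce the ordering $b_1,r_1,r_2,b_2$, in which $r_1$ is a jumping vertex. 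Once these points are settled, the four-case check is short and the proof is complete.
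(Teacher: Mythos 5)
Your proof is correct, but it uses a genuinely different vertex ordering from the paper's. The paper concatenates the two sides: it lists all of $R$ first, sorted in \emph{non-increasing} order of $s(r)=\min_{b\in N(r)}f(b)$, followed by all of $B$ in increasing order of $f$. With that block structure, every edge runs from the $R$-block to the $B$-block, so in any pattern $i<j<k<l$ with $(v_i,v_k),(v_j,v_l)\in E$ one immediately gets $v_i,v_j\in R$ and $v_k,v_l\in B$ --- there is only one configuration to check, and the single chain $f(v_k)\ge s(v_i)\ge s(v_j)$ together with $f(v_k)<f(v_l)$ closes it via convexity. The one clever step there is the reversal of $R$. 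You instead interleave the two sides by left endpoint, sorting by the key $(\ell(r),0)$ versus $(f(b),1)$, which is the natural ``points inside intervals'' order inherited from the interval-graph argument of Theorem~\ref{non-jumping_interval}. The price is a four-way case split (of which three cases are vacuous by short inequality chains); the payoff is a more transparent conceptual link to the interval-graph proof and no need for the reversal trick. Your bookkeeping checks out in all four cases: (B), (C), (D) each collapse to a strict inequality of a quantity with itself, and (A) yields $\ell(v_j)\le f(v_k)\le u(v_j)$, which is exactly where convexity is invoked --- the same single point at which the paper uses it. Your observations about the necessity of the $R$-before-$B$ tie-break and about why one cannot simply quote Theorem~\ref{non-jumping_interval} as a black box (two $R$-vertices with overlapping neighbourhood blocks are not adjacent) are both accurate and worth keeping.
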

	
	\begin{proof}

		Let $G = (R\, \cup\, B, E)$ be a convex bipartite graph with $V(G) = R\, \cup\, B$, where $R\, \cap\, B = \emptyset$. Without loss of generality, suppose that $G$ is convex over $B$. Then there exists a bijective mapping $f:B \rightarrow \{1,2, \hdots ,|B|\}$ such that for all $v \in R$ and any two vertices $x,y \in N(v)$, there is no vertex $z \in B \setminus N(v)$ such that $f(x) < f(z) < f(y)$. 
        %(For example, in Fig.~\ref{fig:convex_proof}, $R=\{a, b, c, d, e\}$ and $B=\{1, 2, 3, 4, 5\}$.)	

{\makeatletter
\let\par\@@par
\par\parshape0
\everypar{} 
\begin{wrapfigure}{l}{0.6\textwidth} 
\begin{center}
\hfil \includegraphics[width=0.6\textwidth]{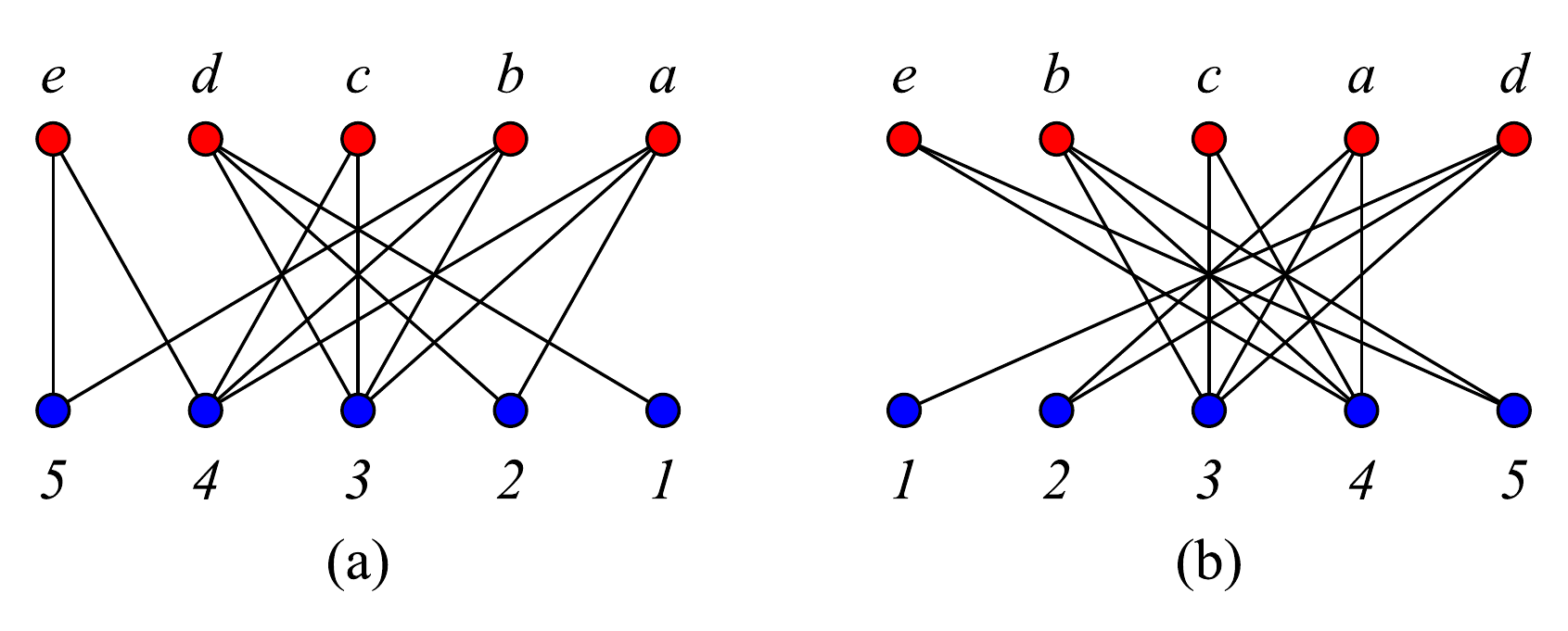}  \hfil 
\vspace{-.5cm}\caption{(a) A convex bipartite graph $G$ and (b) a vertex ordering of $G$.}
\label{fig:convex_proof}
\end{center}
\end{wrapfigure}

We define $s:R\rightarrow B$  so that for any vertex $v \in R$, $s(v) = min_{b \in N(v)}f(b)$. Suppose we sort the vertices $r \in R$ in non-increasing order of $s(r)$. Let $R_{sort}$ be the new ordering, and let $B_f$ be the vertices $b \in B$ sorted in increasing order of $f(b)$. For example, in Fig.~\ref{fig:convex_proof}(b), $R_{sort}=\{e, b, c, a, d\}$ and $B_f=\{1, 2, 3, 4, 5\}$. 
        We now prove that $\gamma = \{R_{sort}, B_f\}$  is a non-jumping labeling of $V(G)$.
\par}%

		For $\gamma$ to be a non-jumping labeling, it must be true that for all positions $i<j<k<l$ in $\gamma$ , if $(v_i,v_k) \in E$ and $(v_j,v_l) \in E$, then $(v_j,v_k) \in E$. Consider such a pair of edges $(v_i, v_k), (v_j,v_l) \in E$. Since $i<k$ and $G$ is a bipartite graph, $v_i \in R$ and $v_k \in B$. Similarly, $v_j \in R$ and $v_l \in B$. 

		Because $i<j$ and the vertices in $R_{sort}$ were ordered in non-increasing order of $s(r)$, we have $s(v_i) \geq s(v_j)$. Also, since $v_k \in N(v_i)$, we know that $f(v_k) \geq s(v_i)$
        %(by definition of $s(v_i) = min_{b \in N(v_i)}f(b)$). 
        %Combining the above two statements, we get 
        Consequently, $f(v_k) \geq s(v_i) \geq s(v_j)$. 

		Since $(v_j, v_l) \in E$ and $G$ is convex on $B$, $N(v_j)$ must contain all vertices $v \in B$ whose mapping $f(v)$ lies in the interval $[s(v_j), f(v_l)]$. Because $f(v_k) \geq s(v_j)$ and, from the ordering in $B_f$, $f(v_k) < f(v_l)$, we find that $f(v_k)$ lies in the interval $[s(v_j), f(v_l)]$. Thus, $v_k \in N(v_j)$, which means that $(v_k,v_j)\in E$, as required.
		\qed
	\end{proof}

\subsection{Characterization of non-jumping graphs}
The graph shown in the Fig.~\ref{fig:njumpingex}(d) is an example of jumping graph. In Theorem~\ref{jumping_graph}, we prove that there is no non-jumping labeling for this graph. Due to space limitations, the proof of Theorem~\ref{jumping_graph} is given in the Appendix. 
	\begin{theorem}
		\label{jumping_graph}
	Not all graphs are non-jumping graphs.
	\end{theorem}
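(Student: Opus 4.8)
The plan is to exhibit one concrete graph $G'$ and prove directly that it admits no non-jumping labeling; the graph in Fig.~\ref{fig:njumpingex}(d) --- which I expect to be the octahedron $K_{2,2,2}$, i.e.\ $K_6$ with a perfect matching removed, a graph that is an \cLG by Fig.~\ref{fig:njumpingex}(e) --- is the natural candidate, and I describe the argument for it. If the drawn graph differs slightly, the same scheme applies, as noted at the end.

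First I would isolate a necessary condition satisfied by every non-jumping labeling. \textbf{Claim.} If $v_1,\dots,v_n$ is a non-jumping labeling of a graph $G$ and $(v_a,v_c)\notin E$ with $a<c$, then either $v_a$ has no neighbour among $v_c,v_{c+1},\dots,v_n$, or $v_c$ has no neighbour among $v_1,v_2,\dots,v_a$. The proof is immediate from the definition: if both alternatives failed, choose $d>c$ with $(v_a,v_d)\in E$ (possible since the first alternative fails and $(v_a,v_c)\notin E$) and $b<a$ with $(v_b,v_c)\in E$ (likewise); then $b<a<c<d$ with $(v_b,v_c)\in E$ and $(v_a,v_d)\in E$, so the non-jumping property forces $(v_a,v_c)\in E$, a contradiction. (In fact the converse holds too, giving a characterisation of non-jumping labelings, but only this direction is needed.)

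Now I would apply the Claim to $G'=K_{2,2,2}$. Each vertex of $K_{2,2,2}$ has exactly one non-neighbour, namely its partner in the same part of the tripartition, so the non-edges of $G'$ form a perfect matching consisting of three non-edges. Suppose, for contradiction, that $v_1,\dots,v_6$ were a non-jumping labeling, and fix a non-edge $(v_a,v_c)$ with $a<c$. By the Claim, either $v_a$ has no neighbour at a position $\geq c$ --- but the unique non-neighbour of $v_a$ is $v_c$, so $\{v_c,v_{c+1},\dots,v_6\}$ would have to be just $\{v_c\}$, forcing $c=6$ --- or, symmetrically, $a=1$. Hence every one of the three non-edges is incident to $v_1$ or to $v_6$. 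But $v_1$ is incident to exactly one non-edge and $v_6$ is incident to exactly one non-edge, so at most two of the three non-edges can be incident to $v_1$ or $v_6$, a contradiction. Therefore $K_{2,2,2}$ has no non-jumping labeling, and the theorem follows.

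The only real obstacle is pinning down the Claim; once it is available the counting is a one-liner, and the argument generalises immediately to any graph in which every vertex has a unique non-neighbour and there are at least three such non-edges (equivalently, $\overline{G}$ is a perfect matching on at least $6$ vertices). A more computational route would avoid the Claim altogether and instead enumerate the vertex orderings of $G'$ up to the automorphisms of $G'$ and the order-reversing symmetry of non-jumping labelings, pointing to an explicit jumping vertex in each surviving case; I would expect the appendix proof to be of this second, longer form.
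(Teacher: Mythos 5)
Your proof is correct, and it is self-contained: since the theorem is purely existential, exhibiting any single jumping graph suffices, so whether Fig.~\ref{fig:njumpingex}(d) really depicts $K_{2,2,2}$ is immaterial to the validity of your argument. Your ``Claim'' is a clean necessary condition and its proof is airtight (the strict inequalities $d>c$ and $b<a$ are correctly extracted from the assumption that $(v_a,v_c)\notin E$), and the counting step --- every non-edge must be incident to $v_1$ or $v_6$, yet the three non-edges form a perfect matching, so at most two of them can touch $\{v_1,v_6\}$ --- is a genuine one-liner. The paper's own proof, relegated to the appendix, argues directly that the particular graph of Fig.~\ref{fig:njumpingex}(d) admits no non-jumping labeling, essentially by examining which vertices can occupy the extreme positions of a hypothetical labeling; it is tied to that one example. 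Your route is different in a useful way: by isolating the reusable structural lemma (for every non-edge $(v_a,v_c)$ with $a<c$ in a non-jumping labeling, $v_a$ has no neighbour at position $\geq c$ or $v_c$ has no neighbour at position $\leq a$), you obtain a whole family of jumping graphs --- any graph whose complement is a perfect matching on at least six vertices --- and a tool that could be applied to other candidate graphs without fresh case analysis. What the paper's approach buys instead is that its specific witness doubles as the witness for Theorem~\ref{jumping_graph_L} (via the $\mathsf{L}$-embedding in Fig.~\ref{fig:njumpingex}(e)); if you wanted your $K_{2,2,2}$ to serve that second purpose you would additionally have to supply an $\mathsf{L}$-embedding of it, but that is not required for the statement at hand.
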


Recall that a  monotone  \cL-embedding is an  \cL-embedding such that the corners of each \cL-segment  are on a straight line. We can completely characterize monotone  \cL-graphs in terms of non-jumping graphs.

\begin{theorem}
	\label{non-jumping_theorem_iif}
	A graph $G$ admits a monotone \cL-embedding if and only if $G$ is a non-jumping graph.
\end{theorem}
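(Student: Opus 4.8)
The plan is to prove both directions by exploiting the left--to--right order of the corners along the line $\ell$ that contains them. First I would dispose of the degenerate orientations of $\ell$. If $\ell$ is horizontal, then, since we only allow single-crossing $\mathsf{L}$-embeddings, the horizontal arms are pairwise disjoint (two overlapping horizontal arms would intersect in a segment, not a point); ordering the corners left to right, the horizontal arm of each vertex is squeezed between its corner and the next corner, which forces every edge to join two \emph{consecutive} corners, so $G$ is a subgraph of a path and hence has no jumping vertex. The case of a vertical $\ell$ is symmetric. If $\ell$ has positive slope, then for any two vertices one segment lies above and to the left of the other, and a short case check on which arm could meet which shows that no two $\mathsf{L}$-segments cross at all, so $G$ is edgeless and trivially non-jumping. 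Thus the only substantive case is $\ell$ of negative slope; after independently rescaling the two axes (which preserves axis-parallel $\mathsf{L}$-shapes and all intersections) and translating, I may assume $\ell$ is the line $y=-x$.

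\emph{Forward direction (monotone $\Rightarrow$ non-jumping).} Given a monotone $\mathsf{L}$-embedding whose corners lie on such an $\ell$, label the vertices $v_1,\dots,v_n$ by increasing $x$-coordinate of the corner; then $v_1.x<\dots<v_n.x$ and $v_1.y>\dots>v_n.y$. The key observation, immediate from the fact that two parallel arms cannot cross and that the ``$v_a$-vertical meets $v_b$-horizontal'' crossing is geometrically impossible when $a<b$, is: for $a<b$ the segments $\mathsf{L}(v_a)$ and $\mathsf{L}(v_b)$ cross if and only if the horizontal arm of $v_a$ reaches the line $x=v_b.x$ \emph{and} the vertical arm of $v_b$ reaches the line $y=v_a.y$. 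Now suppose $(v_i,v_k),(v_j,v_l)\in E$ with $i<j<k<l$. From $(v_j,v_l)\in E$ the horizontal arm of $v_j$ reaches $x=v_l.x$, and since $v_k.x<v_l.x$ it already reaches $x=v_k.x$; from $(v_i,v_k)\in E$ the vertical arm of $v_k$ reaches $y=v_i.y$, and since $v_j.y<v_i.y$ (because $i<j$) it already reaches $y=v_j.y$. Hence $\mathsf{L}(v_j)$ and $\mathsf{L}(v_k)$ cross, i.e.\ $(v_j,v_k)\in E$, so the labeling is non-jumping.

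\emph{Converse (non-jumping $\Rightarrow$ monotone).} Start from a non-jumping labeling $v_1,\dots,v_n$ and build the embedding directly: place the corner of $v_i$ at $(i,-i)$, so all corners lie on $y=-x$. For each $i$ let $F(i)$ be the largest index of a neighbour of $v_i$ exceeding $i$ (set $F(i)=i$ if there is none), and let the horizontal arm of $v_i$ have right endpoint at $x=F(i)+\tfrac12$; for each $j$ let $B(j)$ be the smallest index of a neighbour of $v_j$ less than $j$ (set $B(j)=j$ if none), and let the vertical arm of $v_j$ have top endpoint at $y=-B(j)+\tfrac12$. One checks that the only crossings are transversal single crossings, and that for $i<j$ the segments of $v_i$ and $v_j$ cross exactly when $j\le F(i)$ and $i\ge B(j)$. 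Every edge of $G$ satisfies this by definition of $F$ and $B$. For the reverse inclusion, suppose $i<j$, $j\le F(i)$ and $i\ge B(j)$; if $j=F(i)$ then $(v_i,v_j)\in E$ outright, and similarly if $i=B(j)$; otherwise $(v_{B(j)},v_j)\in E$ and $(v_i,v_{F(i)})\in E$ with $B(j)<i<j<F(i)$, so the non-jumping property applied to this $4$-tuple yields $(v_i,v_j)\in E$. Hence the embedding realises exactly $G$, and it is monotone by construction.

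\emph{Expected main obstacle.} The construction itself is easy; the delicate part is the adjacency bookkeeping in the converse --- confirming that the constructed picture has \emph{no} spurious crossings (a horizontal arm meeting a horizontal arm, or a crossing produced by the ``wrong'' pair of arms) and that every crossing is a genuine single crossing as demanded by the model. The $+\tfrac12$ offsets are chosen precisely so that each arm reaches exactly the corners it must reach and stops just short of the next one, which is what makes the equivalence ``cross $\iff j\le F(i)$ and $i\ge B(j)$'' hold on the nose. A secondary, routine chore is the careful treatment of the degenerate orientations of $\ell$ outlined in the first paragraph.
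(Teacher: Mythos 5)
Your proposal is correct and follows essentially the same route as the paper: the forward direction orders the corners along the line and uses the same monotonicity-of-arm-reach argument, and the converse builds the same embedding (corners on the diagonal, arms extended to the farthest later neighbour and earliest earlier neighbour) with the identical contradiction via the non-jumping property to rule out spurious crossings. The degenerate orientations of the line are handled the same way as in the paper.
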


We prove Theorem~\ref{non-jumping_theorem_iif} by first showing that any non-jumping graph $G$  admits a monotone \cL-embedding in Lemma~\ref{non-jumping_drawing}.  The converse is proven in Lemma~\ref{non-jumping_iif}.

	Before we begin, we note that if a graph has a monotone  \cL-embedding with the corners of the \cL's on a line that is drawn vertically or horizontally, then for any pair
	of vertices $(v_i,v_j)$, there can only be an edge $(v_i,v_j)$ if $i+1=j$, i.e., the graph is a subgraph of a path. 
	Thus, the graph is trivially non-jumping, as
	there cannot be any indices $i<j<k<l$ in a labeling $\gamma$ such that $(v_i,v_k) \in E$ or $(v_j,v_l) \in E$.  A graph with no edges is also trivially non-jumping, and admits a ``degenerate'' \cMLE in which no \cL~ would intersect another even if their arms were extended indefinitely.

 \begin{wrapfigure}[16]{l}{0.4\textwidth}
 %  \begin{figure}[!htbp]
  \begin{center}
\includegraphics[width=0.35\textwidth]{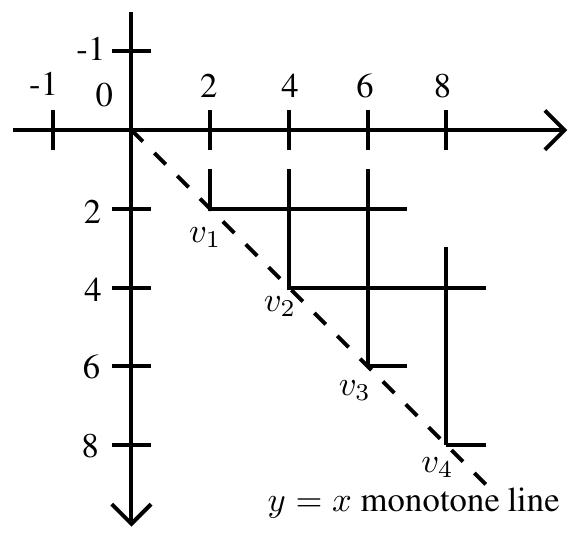}  
\caption{Monotone positioning of the \cL-segments corresponding to $v_1, \ldots, v_4$ on the line $y=x$. }
\label{fig:monotonepositioning}
 \end{center}
%\end{figure} 
\end{wrapfigure}
For convenience, we define a coordinate system over the quarter-plane $\mathbb{R}^2$ beginning with  $(0,0)$ in the top-left corner, 
	and $x\mbox{-}$ and $y\mbox{-}$coordinates increasing to the right and downward respectively. This choice of coordinate system will allow us to construct a \cMLE so that the corner of every \cL lies on the line $y=x$. Moreover, any non-trivial
    \footnote{By non-trivial, we mean a \cMLE with at least one intersection, and with \cL's that are not aligned horizontally or vertically.} 
    \cMLE can be expanded (see Lemma~\ref{lemma:expanding}), translated, and rescaled to create an equivalent embedding with the corners of each \cL on this line.  Note that once we have a drawing with the corners of each \cL line arranged on  $y=x$, we can perform arbitrary affine transformations on the coordinate system without rotating any of the \cL's themselves. For the rest of the paper, unless otherwise indicated, every \cMLE will have its corners aligned on the line $y=x$ in this way.

	\begin{lemma}
		\label{non-jumping_drawing}
		Let $G$ be a non-jumping graph of $n$ vertices and $m$ edges. Then $G$  admits \cMLE on a grid of size $O(n)\times O(n)$, and this embedding can be computed in $O(n+m)$ time. 
	\end{lemma}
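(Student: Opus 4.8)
The plan is to give a constructive proof. Given a non-jumping graph $G$, we first fix a non-jumping labeling $v_1, v_2, \ldots, v_n$ (which exists by hypothesis). We then place the corner of $\mathsf{L}(v_i)$ at the point $(2i, 2i)$ on the line $y=x$, so that corners are strictly increasing along the diagonal. With the chosen coordinate system (origin top-left, $x$ increasing right, $y$ increasing down), the horizontal arm of $\mathsf{L}(v_i)$ extends to the right and the vertical arm extends downward. Observe that $\mathsf{L}(v_i)$ and $\mathsf{L}(v_j)$ with $i<j$ can intersect in only one way: since $v_j$'s corner is down and to the right of $v_i$'s corner, the horizontal arm of $\mathsf{L}(v_i)$ (going right) must cross the vertical arm of $\mathsf{L}(v_j)$ (going down) --- no other pair of arms can meet. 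So $\mathsf{L}(v_i)$ and $\mathsf{L}(v_j)$ intersect precisely when $v_i.w \ge 2(j-i)$ (the horizontal arm of $v_i$ reaches $v_j$'s $x$-coordinate) \emph{and} $v_j.h \ge 2(j-i)$ (the vertical arm of $v_j$ reaches $v_i$'s $y$-coordinate), provided these crossings happen to line up --- actually, since the horizontal arm of $v_i$ lives at height $2i$ and the vertical arm of $v_j$ lives at $x$-coordinate $2j$, they intersect iff $v_i.x + v_i.w \ge 2j$ and $v_j.y + v_j.h \ge 2i$, i.e. the horizontal reach of $v_i$ is long enough and the vertical reach of $v_j$ is long enough. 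The remaining task is to choose the widths $v_i.w$ and heights $v_i.h$ so that exactly the edges of $G$ are realized.

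The key combinatorial step is the choice of arm lengths. For each $i$, let $\mathrm{right}(i) = \max\{k : (v_i,v_k)\in E,\ k>i\}$ (or $i$ itself if no such $k$), and let $\mathrm{left}(i) = \min\{j : (v_j,v_i)\in E,\ j<i\}$ (or $i$ itself). Set the horizontal arm of $\mathsf{L}(v_i)$ to reach just past $x$-coordinate $2\,\mathrm{right}(i)$ (say to $2\,\mathrm{right}(i)+1$) and the vertical arm of $\mathsf{L}(v_i)$ to reach just past $y$-coordinate $2\,\mathrm{left}(i)$. The claim is that with these choices, $\mathsf{L}(v_i)$ and $\mathsf{L}(v_j)$ intersect ($i<j$) if and only if $(v_i,v_j)\in E$. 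The ``only if'' direction is automatic: if they intersect then $j \le \mathrm{right}(i)$ and $i \ge \mathrm{left}(j)$, so there is some edge $(v_i,v_k)$ with $k\ge j$ and some edge $(v_{j'},v_j)$ with $j'\le i$; if $j'<i<j<k$ we would need the non-jumping property with indices $j'<i<j<k$ to conclude $(v_i,v_j)\in E$, and the boundary cases $j'=i$ or $k=j$ give the edge directly. The ``if'' direction is immediate from the definitions of $\mathrm{right}$ and $\mathrm{left}$. The main obstacle --- and the heart of the argument --- is exactly this verification that the non-jumping property forces $(v_i,v_j)\in E$ whenever the arms are long enough; one must be careful to handle the degenerate cases ($\mathrm{right}(i)=i$, etc.) and to confirm that a single crossing occurs rather than overlapping collinear segments, which is why we offset arm endpoints to odd coordinates $2\,(\cdot)+1$ and start corners at even coordinates, so no two arms are collinear.

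Finally, for the complexity and grid-size claims: a non-jumping labeling is assumed given (or found in the relevant subclass), and computing $\mathrm{right}(i)$ and $\mathrm{left}(i)$ for all $i$ is a single pass over the adjacency lists, taking $O(n+m)$ time. All coordinates are bounded by $2n+1$, so the embedding lies on an $O(n)\times O(n)$ grid. The output (corner positions, widths, heights) has size $O(n)$ and is produced directly, so the total running time is $O(n+m)$. I would also remark that the construction yields a monotone $\mathsf{L}$-embedding by definition, since every corner lies on $y=x$, and that the single-crossing condition required in the Preliminaries is met because, as noted, each intersecting pair crosses exactly once (horizontal arm of the lower-indexed vertex against vertical arm of the higher-indexed one) and the coordinate offsets rule out degenerate overlaps.
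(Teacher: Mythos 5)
Your construction is essentially the paper's own proof: corners at $(2i,2i)$ on the line $y=x$, the horizontal arm of $v_i$ reaching just past its farthest later neighbor and the vertical arm of $v_j$ reaching just past its earliest earlier neighbor (the paper writes these as $v_j.w=2|j-b|+1$ and $v_j.h=2|j-a|+1$), the non-jumping property invoked on the witnesses $j'<i<j<k$ to kill spurious intersections exactly as you argue, and the same $O(n)\times O(n)$ grid bound and $O(n+m)$ construction time. The only blemish is an orientation slip in the flipped coordinate system --- the vertical arm must extend toward decreasing $y$ so the intersection condition is $v_j.y - v_j.h \le 2i$ rather than $v_j.y + v_j.h \ge 2i$ --- but your verbal condition $v_j.h \ge 2(j-i)$ is the correct one and nothing else in the argument depends on the typo.
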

	
	\begin{proof}
	Let $G$ be a non-jumping graph of $n$ vertices and $\gamma=\{v_1, v_2, \ldots, v_n\}$ be a non-jumping labeling of $G$. 
		 If there are edges $(v_i,v_k), (v_j,v_l) \in E$ such that $i<j<k<l$, then $(v_j,v_k) \in E $
				
		We now construct an \cL-monotone drawing for $G$ using the coordinate system given above.  Let \cL($v$) be the \cL-drawing of vertex $v$. Then $(v.x,v.y)$ is the corner of \cL($v$) and the horizontal and vertical arms of \cL($v$) have lengths $v.w$ and $v.h$ respectively.
        
		For each $v_j \in V$, let $v_j.x = v_j.y = 2j$; this places all corners on the line $y = x$.   Also, for each $v_j$, if there exists an index $i<j$
		such that some $(v_i,v_j) \in E$, then for $a =$ min$\{i\,|\, i < j$ and $(v_i,v_j) \in E\} $, define $v_j.h = 2|j-a|+1$.  If there is no
		such index $i$, then let $v_j.h = 1$.  Similarly, if there is some index $k>j$ such that $(v_j,v_k) \in E$, 
		then let $b=$ max$ \{k\, |\, k > j$ and $(v_j,v_k) \in E\}$
		and define $v_j.w = 2|j-b|+1$; otherwise, let $v_j.w = 1$; see Fig.~\ref{fig:monotonepositioning}. %\todo {mention clearly the monotone line used here} 

		To see that this is a valid \cL-monotone drawing, first recall that the corners of all the \cL's are on the diagonal line
		$y=x$.  Also note that for each index $a<b$, $v_a.x < v_b.x$ and $v_a.y < v_b.y$.	
        We must show that for indices $a$ and $b$ with $(v_a,v_b) \in E$, \cL($v_a$) and \cL($v_b$) intersect. Without loss of generality, suppose $a<b$.  Then, we must show that:
		$
			|v_a.x-v_b.x| < v_a.w$
		  and  
			$|v_a.y-v_b.y| < v_b.h;		$           
		Finally, we must show that for $(v_j,v_k) \notin E$ with $j<k$, either
		$|v_j.x-v_k.x| > v_j.w$
		 or 
		$|v_j.y-v_k.y| > v_k.h$.

		It is clear from the computation of the width and height of each \cL\  that whenever $(v_a,v_b)\in E$,
		\cL($v_a$) and \cL($v_b$) intersect as described above.%\todo{we use \cL($v$) to denote a drawing of the vertex $v$, please change all places } 
        
		Now let $v_j$ and $v_k$ be vertices in $G$ such that $(v_j,v_k) \notin E$, with $j<k$. 
		Suppose for a contradiction that \cL($v_j$) and \cL($v_k$) intersect; 
		that is, $|v_j.x-v_k.x| < v_j.w$ and $|v_j.y-v_k.y| < v_k.h$.
		Since $v_j$ and $v_k$ are not adjacent, by the construction of $v_j.w$, there must be some vertex $v_l$ such that $l>k$ and
		\[
		v_j.w = |v_j.x - v_l.x| + 1 > |v_j.x-v_k.x| \\ 
        \]
        Similarly, by the construction of $v_l.h$, there must be some vertex $v_i$ such that $i<j$ and
        \[
		v_k.h = |v_k.y - v_i.y| + 1 > |v_j.y-v_k.y|
		\]
		We now have $i < j < k < l$ and the two edges $(v_i,v_k), (v_j,v_l)\in E$.  Since the indices $i,j,k,l$ are taken from a non-jumping labeling
		of $G$, we must have $(v_j,v_k) \in E$, a contradiction.
        
          The entire drawing is contained in a rectangle of dimensions $2n\times 2n$.  To see this, note that no corner of any \cL-segment will be placed to the left of the line $x=2$, nor below the line $y=2n$.  Also, no horizontal arm of an \cL\ will extend to the right beyond the line $x=2n+1$, as this is one unit to the right of \cL($v_n$), nor will any vertical arm extend above the line $y=1$. 
        
       We can construct this drawing in $O(|V| + |E|)$ time. First, for each $ v \in V$, we plot the corner of \cL($v$) at $(v.x,v.y)$, and draw its two arms with unit length. Then, for each edge $(v_i,v_j) \in E$ with $i<j$, we extend the horizontal arm of \cL($v_i$) to have length at least $2|i-j|+1$, and extend the vertical arm of \cL($v_j$) to have length at least $2|i-j|+1$.
		\qed
	\end{proof}
	
	\begin{lemma}
		\label{non-jumping_iif}
	Let $\mathbb{L}$ be a monotone \cL-embedding of a graph $G$. Then $G$ is a non-jumping graph.
	\end{lemma}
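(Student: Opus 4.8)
The plan is to read a non-jumping labeling straight off the geometry of $\mathbb{L}$. By the normalization recorded just before the lemma I may assume $\mathbb{L}$ is non-trivial and that the corner of every \cL lies on the line $y=x$ (the empty-edge case, and the case in which the corners lie on an axis-parallel line, are already observed to be trivially non-jumping). Relabel the vertices $v_1,\dots,v_n$ so that their corners occur along $y=x$ in increasing order of coordinate, and set $p_j:=v_j.x=v_j.y$, so that $p_1<p_2<\dots<p_n$ (distinctness of corner positions may be assumed after a harmless perturbation). I claim that $\gamma=\{v_1,\dots,v_n\}$ is a non-jumping labeling, which proves the lemma.

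The only geometric input needed is a coordinate criterion for adjacency. Fix $a<b$; the corners are $(p_a,p_a)$ and $(p_b,p_b)$ with $p_a<p_b$, and recall that the vertical arm of \cL($v$) runs from $(v.x,v.y)$ to $(v.x,v.y-v.h)$ and its horizontal arm from $(v.x,v.y)$ to $(v.x+v.w,v.y)$. Among the four pairs formed by one horizontal and one vertical arm, three cannot meet: the two horizontal arms sit at the distinct heights $y=p_a$ and $y=p_b$; the two vertical arms sit on the distinct lines $x=p_a$ and $x=p_b$; and the vertical arm of \cL($v_a$) lies on $x=p_a$, which is disjoint from both arms of \cL($v_b$), as those lie in the halfplane $x\ge p_b>p_a$. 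Hence $\mathbb{L}(v_a)$ and $\mathbb{L}(v_b)$ intersect if and only if the horizontal arm of \cL($v_a$) meets the vertical arm of \cL($v_b$), that is, if and only if the point $(p_b,p_a)$ lies on both; since $p_a<p_b$, this is equivalent to $v_a.w\ge p_b-p_a$ and $v_b.h\ge p_b-p_a$. As $\mathbb{L}$ is an $\mathsf{L}$-embedding, this establishes the criterion: for $a<b$, $(v_a,v_b)\in E$ if and only if $v_a.w\ge p_b-p_a$ and $v_b.h\ge p_b-p_a$.

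The verification now takes two lines. Suppose $(v_i,v_k),(v_j,v_l)\in E$ with $i<j<k<l$. Applying the criterion to the edge $(v_j,v_l)$ gives $v_j.w\ge p_l-p_j>p_k-p_j$, using $p_k<p_l$; applying it to the edge $(v_i,v_k)$ gives $v_k.h\ge p_k-p_i>p_k-p_j$, using $p_i<p_j$. Hence $v_j.w\ge p_k-p_j$ and $v_k.h\ge p_k-p_j$, so the criterion applied to the pair $(v_j,v_k)$ (with $j<k$) gives $(v_j,v_k)\in E$. Thus $\gamma$ contains no jumping vertex and $G$ is a non-jumping graph. The only step that needs real care is the adjacency criterion itself: one must confirm that, with all corners on a common non-axis-parallel line, an edge is forced into exactly one configuration, and that the degenerate cases mentioned above are genuinely covered by the preceding remarks; granting that, the remainder is just the inequality chain driven by $p_i<p_j$ and $p_k<p_l$.
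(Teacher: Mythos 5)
Your proposal is correct and follows essentially the same route as the paper's proof: order the vertices by corner position along the line, dispose of the axis-parallel and edgeless cases by the preceding remarks, and then derive the intersection of \cL($v_j$) and \cL($v_k$) from the two given intersections via the monotonicity of the corner coordinates. The only difference is cosmetic --- you normalize to $y=x$ and spell out the adjacency criterion (unique arm-pair configuration, intersection point $(p_b,p_a)$) that the paper states directly as inequalities on $v.w$ and $v.h$.
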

	
	\begin{proof}
	 
		 Since $\mathbb{L}$ is monotone, the corners of each \cL($v$) lie on a straight line. 	   
		 Let $\gamma({\mathbb{L}})=\{\mathsf{L}(v_1), \mathsf{L}(v_2), \ldots, \mathsf{L}(v_n)\}$  
		 be an ordering of the \cL-segments according to their corner positions from left to right. 
		 %We denote the $x$ and $y$-coordinates of  \cL($v$) by  $v.x$ and $v.y$, and the lengths of the  horizontal and  the vertical arms of \cL($v$) by  $v.w$ and $v.h$, respectively. 		 
		 If the line on which the corners of the \cL's lie is horizontal or vertical, then as described above, the graph is a subgraph
		 of a path, and is trivially non-jumping.  Similarly, if the corners lie on a line with negative slope, then there are no edges,
		 so the graph is trivially non-jumping.

         The remaining possibility is that the corners of the \cL-segments lie on a line with positive slope. In this case, for each pair of indices $a<b$, we have
		 $v_a.x < v_b.x$ and $v_a.y < v_a.y$.   
		 Now, $\gamma({\mathbb{L}})$ gives us an ordering $\gamma=\{v_1, v_2, \ldots, v_n\}$ of the corresponding  vertices  of $G$.
   	 We want to prove that $\gamma$ is a non-jumping labeling of $G$. 
		 For any four vertices $v_i,v_j,v_k,v_l$ with $i<j<k<l$, 
		 we must show that if \cL($v_i$) intersects \cL($v_k$) and \cL($v_j$) intersects  \cL($v_l$),
		 then \cL($v_j$)   and  \cL($v_k$) also intersect. 
	%	We will show that $v_j$ and $v_k$ intersect in the L-monotone drawing, so that by the validity of the drawing, $v_jv_k \in E$.

To begin, note that if  \cL($v_i$)  and  \cL($v_k$) intersect,  then 		
			$|v_i.x-v_k.x| < v_i.w$    and 
			$|v_i.y-v_k.y| < v_k.h$.		
		Similarly, if  \cL($v_j$)   and  \cL($v_l$) intersect,  we have
		$		|v_j.x-v_l.x| < v_j.w$	  and  	$|v_j.y-v_l.y| < v_l.h.$ 	
		By the ordering of $\gamma(\mathbb{L})$, we have $v_i.x < v_j.x < v_k.x$ and $v_j.y < v_k.y < v_l.y$.  Thus,		 
			$|v_j.x-v_k.x| < |v_j.x-v_l.x| < v_j.w$
		~ and ~
			$|v_j.y-v_k.y| < |v_i.y-v_k.y| < v_k.h$.  So \cL($v_j$) and \cL($v_k$) intersect, and $(v_j,v_k) \in E$. 
        \qed
	\end{proof}

In the proof of Theorem~\ref{jumping_graph},	we show that the graph in Fig.~\ref{fig:njumpingex}(d) is a jumping graph. However, it is easy to verify that the \cLE in Fig.~\ref{fig:njumpingex}(e) is the \cLE of the jumping graph. In Theorem~\ref{non-jumping_theorem_iif}, we showed that a graph $G$ admits a monotone \cL-embedding if and only if $G$ is a non-jumping graph. This proves the following theorem:
	\begin{theorem}
		\label{jumping_graph_L}
	Not all \cL -graphs are monotone \cL -graphs.
	\end{theorem}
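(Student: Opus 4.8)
The plan is to combine the characterization of Theorem~\ref{non-jumping_theorem_iif} with the existence of a jumping graph guaranteed by Theorem~\ref{jumping_graph}. By Theorem~\ref{non-jumping_theorem_iif}, a graph admits a \cMLE if and only if it is non-jumping; equivalently, no jumping graph has a monotone \cL-embedding. Hence, to show that the class of \cL-graphs is strictly larger than the class of monotone \cL-graphs, it suffices to exhibit a single graph that is simultaneously an \cLG and a jumping graph.

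I will take as this witness the graph $G'$ drawn in Fig.~\ref{fig:njumpingex}(d). First, I would invoke Theorem~\ref{jumping_graph} (proved in the Appendix), which establishes that $G'$ has no non-jumping labeling, so $G'$ is a jumping graph; by Theorem~\ref{non-jumping_theorem_iif} it therefore admits no monotone \cL-embedding. Second, I would exhibit the drawing of $G'$ shown in Fig.~\ref{fig:njumpingex}(e) and verify that it is a legal \cLE: every vertex is represented by an \cL-segment, two segments cross exactly when the corresponding vertices of $G'$ are adjacent, and each such crossing is a single crossing, as required by our convention on \cL-embeddings. This shows that $G'$ is an \cLG. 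Putting the two facts together, $G'$ is an \cLG that is not a monotone \cLG, which is precisely the assertion of Theorem~\ref{jumping_graph_L}.

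The substantive content of the argument lies not in the present theorem but in its two inputs. Checking that Fig.~\ref{fig:njumpingex}(e) is a valid \cLE is a finite, routine verification against the adjacency list of $G'$. The genuine obstacle is showing that $G'$ admits \emph{no} non-jumping labeling at all: in contrast with the proofs in Section~\ref{se.nonjumping}, where it is enough to produce one good ordering, here one must rule out every vertex ordering. That verification is the content of the proof of Theorem~\ref{jumping_graph} in the Appendix, which one expects to carry out by exploiting the symmetry of $G'$ to reduce to a small number of canonical orderings and then locating a forced jumping vertex in each case; for the present statement we simply cite it.
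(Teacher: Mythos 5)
Your proposal is correct and follows exactly the paper's argument: it exhibits the graph of Fig.~\ref{fig:njumpingex}(d) as a jumping graph (citing Theorem~\ref{jumping_graph}), notes that Fig.~\ref{fig:njumpingex}(e) gives it a valid \cLE, and applies the characterization of Theorem~\ref{non-jumping_theorem_iif} to conclude it is not a monotone \cLG. No differences from the paper's proof worth noting.
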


		\subsection{Recognition of a non-jumping labeling}
	While it is difficult to determine whether a particular graph is non-jumping, the following theorem shows that we can easily verify whether a given labeling for a graph is a non-jumping labeling.
	
	\begin{theorem}
		Given a graph $G=(V,E)$ with vertex labeling $\gamma=\{v_1, v_2,\ldots, v_n\}$, it can be determined in $O(|V|\log |V| + |E|)$ time whether $\gamma$ is a non-jumping labeling for $G$.

	\end{theorem}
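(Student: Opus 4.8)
The plan is to reduce testing the labeling to $O(n)$ range‑minimum queries. Throughout write $n=|V|$, $m=|E|$, and for a non‑isolated vertex $v_k$ let $L_k:=\min\{\,i:(v_i,v_k)\in E\,\}$ and $R_k:=\max\{\,i:(v_i,v_k)\in E\,\}$ (put $L_k=+\infty$, $R_k=-\infty$ when $v_k$ is isolated). The first, and key, step is the observation that $v_j$ is a jumping vertex for the labeling $\gamma$ if and only if there is an index $k$ with $j<k<R_j$, $(v_j,v_k)\notin E$, and $L_k<j$. Indeed, given such a $k$, the indices $i:=L_k$, $k$, $l:=R_j$ satisfy $i<j<k<l$, $(v_i,v_k),(v_j,v_l)\in E$, and $(v_j,v_k)\notin E$; conversely, in any quadruple $i<j<k<l$ witnessing that $v_j$ is jumping, the index $k$ has $L_k\le i<j$, $R_j\ge l>k$, and $k\notin N(v_j)$. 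Consequently, $\gamma$ is a non‑jumping labeling if and only if for every $j$ with $R_j>j$, \emph{every} index $k$ lying in the open interval $(j,R_j)$ but not in $N(v_j)$ satisfies $L_k\ge j$.

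This immediately suggests the algorithm. First compute $L_i$ and $R_i$ for all $i$, and build, for each vertex, its adjacency list sorted by index; all of this takes $O(n+m)$ time (the sorting being one global bucket sort over $\{1,\dots,n\}$). Next build a range‑minimum‑query structure for the array $(L_1,\dots,L_n)$, for instance a sparse table, in $O(n\log n)$ time, supporting $O(1)$ queries of $\min_{a\le k\le b}L_k$. Then, for each $j$ with $R_j>j$: let $n_1<n_2<\dots<n_t=R_j$ be the neighbours of $v_j$ with index in $(j,R_j]$, set $n_0:=j$, and for each $s=0,\dots,t-1$ with $n_s+1\le n_{s+1}-1$ query $\mu_s:=\min_{\,n_s<k<n_{s+1}}L_k$; if some $\mu_s<j$, report that $\gamma$ is jumping (and $v_j$ is a jumping vertex). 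If no $j$ triggers, report that $\gamma$ is non‑jumping. Correctness follows from the characterization, since the open intervals $(n_s,n_{s+1})$, $s=0,\dots,t-1$, are pairwise disjoint and their union is exactly the set of indices of $(j,R_j)$ that are not neighbours of $v_j$. For the running time, the work for a fixed $j$ consists of $O(1+\deg(v_j))$ range‑minimum queries, each answered in $O(1)$ time; summing over all $j$ gives $O(n+m)$, so the overall cost is $O(n\log n+m)=O(|V|\log|V|+|E|)$.

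The main obstacle is really the first step: the naive check — for each $j$, scanning the candidate indices $k\in(j,R_j)$ and testing adjacency to $v_j$ — can take $\Theta(n^2)$ time, since a low‑degree vertex $v_k$ whose smallest‑indexed neighbour sits far to the left is inspected for many distinct values of $j$. The fix is to describe the relevant candidate set for $v_j$ as the complement of $N(v_j)$ inside $(j,R_j)$, which is a union of $O(\deg(v_j))$ contiguous index intervals once the adjacency lists are sorted, and to replace the per‑index adjacency tests by one range‑minimum query of $L$ per interval; this makes the cost for $v_j$ proportional to its degree. (An equivalent implementation avoids range‑minimum queries and instead performs an offline left‑to‑right sweep with a Fenwick tree that, for each $j$, counts $|\{k:j<k<R_j,\ L_k<j\}|$ and compares it against $|\{k\in N(v_j):j<k<R_j,\ L_k<j\}|$, the latter read off from the sorted adjacency list of $v_j$; the complexity is the same.)
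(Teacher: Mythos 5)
Your proposal is correct, but it takes a genuinely different route from the paper. The paper's proof is geometric: it feeds $\gamma$ into the drawing procedure of Lemma~\ref{non-jumping_drawing} to produce a candidate monotone $\mathsf{L}$-embedding, then runs an orthogonal segment-intersection sweep and declares $\gamma$ non-jumping exactly when the reported intersections are the $|E|+|V|$ expected ones (one per edge plus one per corner); correctness there leans on the equivalence between valid monotone embeddings and non-jumping labelings established in Lemmas~\ref{non-jumping_drawing} and~\ref{non-jumping_iif}, and the $|V|\log|V|$ term comes from the sweep. You instead work purely combinatorially: your characterization --- $v_j$ is jumping iff some non-neighbor $k\in(j,R_j)$ has $L_k<j$ --- is correct in both directions (taking $i=L_k$ and $l=R_j$ for the forward implication), and your reduction of the non-neighbor set inside $(j,R_j)$ to $O(\deg(v_j))$ gaps between consecutive sorted neighbors, each handled by one $O(1)$ range-minimum query on the array $(L_1,\dots,L_n)$, gives $O(|V|\log|V|+|E|)$ with the log coming only from the sparse table. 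The paper's approach buys brevity by reusing machinery it has already built; yours buys self-containment, avoids geometric primitives and any appeal to the embedding theorems, and in fact would drop to $O(|V|+|E|)$ if you substituted a linear-time RMQ structure, which slightly improves on the stated bound.
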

	
		\begin{proof}
	%\todo {too long?}	
		Using the procedure described in Lemma \ref{non-jumping_drawing}, we can construct an \cL-monotone embedding of a graph $G=(V,E)$ in $O(|V| + |E|)$ time given a non-jumping labeling $\gamma$. Let us call the procedure $P$.

		From Lemma \ref{non-jumping_iif}, we know that given any \cL-monotone embedding, we can construct a non-jumping labeling $\gamma$ by sorting the vertices $v_i \in G(V)$ in increasing order of this corner coordinate $v_{i}.x$.
        %(we can use either $v_{i}.x$ or $v_{i}.y$ since $v_{i}.x = v_{i}.y$).
        Let us call this order $V_{sort}$. The $\gamma$ thus constructed from $V_{sort}$ is a non-jumping labeling.

		$P$ produces a valid \cL-monotone embedding if and only if the input labeling $\gamma$ is non-jumping. To prove this, let us suppose we get a valid \cL-monotone embedding from $P$ using a jumping labeling $\gamma_{jump}$. Let us arrange the vertices $v_i$ in increasing order of corner coordinates $v_{i}.x$ to obtain $V_{sort}$. $V_{sort}$ must give a non-jumping labeling. Thus, our assumption that $\gamma_{jump}$ is a jumping labeling is invalid and we get a contradiction.

		We can use this to test if any $\gamma$ is non-jumping or not. Let the drawing produced by $P$ using $\gamma$ be $P(\gamma)$. If $P(\gamma)$ is a valid \cL-monotone embedding, $\gamma$ must be non-jumping. A valid \cL-embedding has $2|V|$ line segments (one vertical and one horizontal for each \cL-shape). Similarly, there are $|E| + |V|$ line segment intersections (one intersection for every $(v_i,v_j) \in E$, and one intersection at the corner of each \cL-shape).

		Using an orthogonal line segment intersection search (e.g., a sweep line algorithm as described in ~\cite{chazelle1994algorithms}), all intersections in $P(\gamma)$ can be listed in $O(N\log N + k)$ time, where $N = 2|V|$ is the number of total line segments, and $k = O(N^2)$ is the number of possible intersections. It suffices to check the first $|E| + |V|$ intersections to determine if $P(\gamma)$ is a valid \cL-monotone embedding: if there is an unwanted intersection in the first $|E| + |V|$ intersections, then the embedding is invalid, and $\gamma$ is jumping. On the other hand, if there are more than $|E|+|V|$ intersections, these additional intersections must be invalid and $\gamma$ is jumping. Otherwise, $\gamma$ is non-jumping.  
        
        Since only the first $k = |E| + |V|$ intersections need to be examined, we only need $O(|V|\log |V| + |E|)$ time to determine if a labeling is non-jumping or not.\qed
        
	\end{proof}

	\section{Other $\mathsf{L}$-graphs}
	\label{sec.otherLgraphs}
	In this section we prove that distance-hereditary graphs and $k$-leaf power graphs (for $k\le 4$) admit  \cL-embeddings. We begin with a lemma about transformations of \cL-embeddings; this result will allow us to derive new \cL-graphs from old ones.  

\begin{lemma}
    \label{lemma:expanding}
    Let $\LLL$ be an \cL-embedding of a graph $G$, and \cL($v$) be the \cL~ corresponding to vertex $v$. Then,
    a valid \cL-embedding $\LLL'$ of $G$ can be constructed from $\LLL$ by expanding an infinitesimal slice of $\LLL'$ that is parallel to an arm of \cL($v$).
%    a) Expand $\LLL$ rightward with respect to \cL($v$)
%     b) Expand $\LLL$ leftward with respect to \cL($v$)   
%      c) Expand $\LLL$ upward with respect to \cL($v$)  
%      d) Expand $\LLL$ downward with respect to \cL($v$)
    
    \end{lemma}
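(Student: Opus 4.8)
The plan is to exhibit the ``expansion'' as a concrete cut-and-shift of the drawing and then verify directly that it neither destroys nor creates crossings. I will describe a slice parallel to the vertical arm of \cL($v$); a slice parallel to the horizontal arm is the mirror construction with the two coordinate axes exchanged.

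First I would normalise $\LLL$ by an arbitrarily small perturbation so that the $x$-coordinates of the corners of the \cL-segments are pairwise distinct and no right endpoint of a horizontal arm shares an $x$-coordinate with a corner; such a perturbation changes no crossing and is therefore harmless. Then I would fix a vertical line $\ell\colon x=c$ (parallel to the vertical arm of \cL($v$)) whose abscissa $c$ differs from all of these $x$-coordinates, together with a width $\varepsilon>0$ smaller than the least gap between two consecutive such $x$-coordinates. The expansion $\Phi_\varepsilon$ is then defined by: leave every point with $x<c$ fixed; translate every point with $x>c$ by $(\varepsilon,0)$; and reconnect each horizontal arm that meets $\ell$ across the resulting gap, so that it remains a single segment, keeping its height and its left endpoint while its length grows by $\varepsilon$. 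Set $\LLL'=\Phi_\varepsilon(\LLL)$; since $\varepsilon$ may be taken as small as desired, this is the announced infinitesimal expansion.

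The first step of the verification is to note that $\LLL'$ is again a family of \cL-segments indexed by $V(G)$: a vertical arm keeps its length and has its single abscissa shifted by $0$ or $\varepsilon$, a horizontal arm keeps or increases its length, and in both cases the corner remains the common endpoint of the two arms, so the image of each \cL-segment is again a horizontal-plus-vertical \cL-segment. The next step is that adjacencies are preserved. If $(u,w)\in E$ is realised in $\LLL$ by a crossing of a vertical arm $g$, at abscissa $\xi$, with a horizontal arm $h$ spanning an open interval $(h^-,h^+)\ni\xi$ at a fixed height, then because $\Phi_\varepsilon$ is strictly increasing on $x$-coordinates, fixes every $y$-coordinate, and only translates or lengthens $h$, the image abscissa of $g$ still lies strictly between the images of $h^-$ and $h^+$; a short case split on where $\xi,h^-,h^+$ sit relative to $c$ makes this explicit, so $g$ and $h$ still cross and \cL($u$) still meets \cL($w$).

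The genuinely delicate direction, and the place where the genericity of $c$ and the smallness of $\varepsilon$ are actually used, is that no new crossing is created. Given a horizontal arm $h$ and a vertical arm $g$ that are disjoint in $\LLL$ (necessarily arms of different \cL-segments, since the two arms of one \cL-segment meet at its corner), the $y$-coordinates are untouched, so it suffices to assume the height of $h$ lies in the $y$-range of $g$ and to show that the abscissa of $g$ stays outside the $x$-span of $h$. If $h$ does not meet $\ell$, this is immediate, since $h$ and $g$ either shift by the same amount or remain on opposite sides of $\ell$. If $h$ meets $\ell$, then after the expansion $h$ spans $(h^-,h^++\varepsilon)$: a vertical arm $g$ lying on the right of $\ell$ has $\xi>c>h^-$ and, being disjoint from $h$, satisfies $\xi>h^+$ (strict by genericity), so after $g$ is shifted to $\xi+\varepsilon$ it lies beyond $h^++\varepsilon$; a vertical arm on the left keeps $\xi<h^-$. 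Either way no crossing appears. Hence every non-edge stays a non-edge and, for each edge, the unique crossing witnessing it remains the only crossing between the two \cL-segments, so $\LLL'$ is a valid single-crossing \cL-embedding of $G$; the horizontal-slice case follows by swapping the roles of the axes. The one obstacle worth flagging is exactly this last argument: a stretched horizontal arm could a priori bump into a vertical arm it previously missed, and ruling that out is what forces the careful choice of the cut position and of $\varepsilon$ — everything else is routine bookkeeping.
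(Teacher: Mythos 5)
Your proof is correct and follows the same cut-and-shift idea as the paper's: translate everything on one side of a line parallel to an arm of $\mathsf{L}(v)$ and lengthen the arms that span the cut. The paper merely lists the four directional expansions and asserts that all intersections are preserved and none are created, whereas you supply the verification the paper omits --- the genericity of the cut position, and the case analysis showing a stretched horizontal arm cannot reach a vertical arm it previously missed --- which is precisely the delicate point.
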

    \begin{proof}

    There are four ways in which $\LLL$ can be expanded:

      \begin{enumerate}
		\item[a)] Expand $\LLL$ rightward with respect to \cL($v$)  as follows: For every \cL($w$) in $\LLL$ with a corner to the right of \cL($v$), move \cL($w$) to the right by one unit; also, for every vertex \cL($u$) with a corner to the left of or vertically aligned with \cL($u$) that intersects such an \cL($w$), extend the horizontal arm of \cL($u$) to the right by one unit.
        \item[b)] Expand $\LLL$ leftward with respect to \cL($v$) as follows: For every \cL($u$) in $\LLL$ with a corner to the left of \cL($v$), move \cL($u$) to the left by one unit; also, if such an \cL($u$) intersects a \cL($w$) that has its corner vertically aligned with or to the right of \cL($v$), then extend the horizontal arm of \cL($u$) by one unit.
        \item[c)] Expand $\LLL$ upward with respect to \cL($v$), by replacing the words `right' and `left' in the description of expanding rightward above with `up' and `down' respectively, and exchanging the words `horizontal' and `vertical'.
        \item[d)] Expand $\LLL$ downward by similarly modifying the description of a leftwards expansion.
    \end{enumerate}
        To show that any of these operations produces another valid embedding $\LLL'$ of $G$, we can simply observe that in each transformation, all intersections of \cL's are preserved, and no new intersections are introduced.
        \qed
    \end{proof}
    
    Using this result, we find that certain modifications of an \cL-graph result in another \cL-graph.

%\begin{wrapfigure}[12]{r}{0.5\textwidth}
%\begin{center}
%\vspace{-.8cm}\includegraphics[width=0.5\textwidth]{insertionillustration.eps}\vspace{-.3cm}
%\caption{Illustration for the proof of Theorem~\ref{theorem:distance-hereditary}. }
%\label{fig:example_distance_h}
%\end{center}
%\end{wrapfigure}  

\begin{theorem}
		\label{theorem:distance-hereditary}
		Let $G$ be a graph that admits  an \cL-embedding, and $G'$ be a graph constructed from $G$ by adding a pendant vertex, a true twin, or a false twin in $G$. Then $G'$ admits  an \cL-embedding.
	\end{theorem}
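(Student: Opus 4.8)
The plan is to handle each of the three operations (adding a pendant vertex, a true twin, or a false twin) separately, in each case starting from a valid \cL-embedding $\LLL$ of $G$ and explicitly producing a valid \cL-embedding $\LLL'$ of $G'$. In every case I would exploit Lemma~\ref{lemma:expanding}: before inserting the new \cL-segment, I open up room in $\LLL$ by expanding an infinitesimal slice parallel to an arm of the relevant \cL($v$) (here $v$ is the neighbor of the pendant, or the vertex being twinned). This guarantees that there is a thin empty strip in which the new segment can live without disturbing the existing intersection pattern, and the lemma certifies that the pre-insertion embedding is still valid.

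For the \textbf{true twin} case, let $u$ be the new vertex with $N(u)\setminus\{v\}=N(v)\setminus\{u\}$ and $(u,v)\in E$. I would place \cL($u$) as a slightly shifted ``parallel copy'' of \cL($v$): expand a slice parallel to (say) the horizontal arm of \cL($v$) to create room just below \cL($v$), then draw \cL($u$) with the same corner $x$-coordinate as $v$ and a corner $y$-coordinate offset by the infinitesimal amount, giving it the same arm lengths as \cL($v$). Since \cL($u$) is an $\varepsilon$-translate of \cL($v$), it crosses exactly the same segments \cL($v$) does — i.e. every \cL($w$) with $w\in N(v)$ — and because the translation is parallel to one of $v$'s arms, \cL($u$) and \cL($v$) also cross each other. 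For the \textbf{false twin} case ($N(u)=N(v)$, $(u,v)\notin E$), the construction is the same parallel-copy idea, but I must additionally separate \cL($u$) from \cL($v$) so they do not intersect; I would shorten the arm of \cL($u$) (or equivalently push its corner past \cL($v$)) so that it still reaches every neighbor of $v$ but no longer meets \cL($v$) itself. One must check the single-crossing convention from the preliminaries is maintained, which follows because we only translate/trim by infinitesimal amounts in a freshly created empty strip.

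For the \textbf{pendant vertex} case, let $u$ be the new vertex with $N(u)=\{v\}$. Here I pick any point on one of the arms of \cL($v$), say the horizontal arm, that is not already a crossing point; I expand an infinitesimal slice through that point (parallel to the vertical direction) to create an empty vertical corridor crossing \cL($v$)'s horizontal arm; and I drop a tiny \cL($u$) whose vertical arm pierces \cL($v$)'s horizontal arm at that point and whose horizontal arm is short enough to avoid every other segment. Since \cL($u$) lives entirely inside the fresh empty slice except for the single puncture through \cL($v$), it meets \cL($v$) exactly once and nothing else, as required.

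The main obstacle I expect is the \textbf{false twin} construction: making \cL($u$) hit every neighbor of $v$ while \emph{avoiding} \cL($v$) is delicate, since a genuine parallel copy of \cL($v$) necessarily crosses \cL($v$) (their corners lie on a near-vertical or near-horizontal line). The fix is to notice that each neighbor $w$ of $v$ crosses \cL($v$) at a point strictly in the interior of one of $v$'s arms; if I translate \cL($u$) off \cL($v$) by a small amount \emph{and} extend its arms slightly, all of those interior crossing points are still covered, but I can choose the direction of the translation (and which endpoint of each arm to extend) so that \cL($u$) slips past the corner of \cL($v$) rather than through it — this requires a short case analysis on whether $w$'s crossing is on the horizontal or vertical arm of \cL($v$), together with a uniform choice of the infinitesimal parameter small enough to work for all finitely many neighbors simultaneously. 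A secondary, more routine check in all three cases is that no \emph{unwanted} new intersections are created, which is immediate because the new segment is confined (apart from its intended crossings) to an empty slice produced by Lemma~\ref{lemma:expanding}.
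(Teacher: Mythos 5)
Your overall strategy is the same as the paper's: use Lemma~\ref{lemma:expanding} to open an empty slice next to \cL($v$) and then insert the new segment as a slightly offset near-copy of \cL($v$) (for twins) or as a tiny piercing \cL-segment (for the pendant). The pendant and false-twin constructions match the paper's in substance, and your discussion of why the false twin is the delicate case --- the copy must clear the corner of \cL($v$) while still covering all of its crossing points --- is resolved exactly as in the paper, by pushing the corner of the copy past that of \cL($v$) (down and to the left) while extending both arms so that no crossing is lost.

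The true-twin step, however, would fail as written. You translate \cL($u$) purely in the direction of one arm of \cL($v$) (same corner $x$-coordinate, $y$-coordinate offset by $\varepsilon$) and assert that ``because the translation is parallel to one of $v$'s arms, \cL($u$) and \cL($v$) also cross each other.'' They do not cross in the sense the paper requires: the two vertical arms become collinear and overlap along a subsegment of the line $x=v.x$, while the horizontal arm of \cL($u$) lies entirely below the corner of \cL($v$) and therefore misses its vertical arm. The preliminaries restrict attention to single-crossing embeddings in which an adjacency is realized by the horizontal arm of one segment crossing the vertical arm of the other, so an overlap of parallel arms is not an admissible realization of the edge between the twins. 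The repair is the paper's construction: offset the twin's corner \emph{diagonally} (down and to the right), lengthen its vertical arm so that its tip stays at the same height as that of \cL($v$) (your ``same arm lengths'' would otherwise risk losing a neighbor that crosses at or near the tip), and shorten its horizontal arm so that its tip stays at the same $x$-coordinate; then the vertical arm of the new segment properly crosses the horizontal arm of \cL($v$), and all other intersections are preserved.
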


	\begin{proof}
%		\todo {check notation be consistent with other part}
%		\todo {refer figure}
	 Let $\mathbb{L}$ be an \cL-embedding of $G$, 
     and suppose we derive $G'$ from $G$ by adding a pendant vertex $v$ to $G$ with neighbor $u$.  
     Let \cL($u$) represent $u$ in $\LLL$.
     To create $\LLL'$, we must place \cL($v$) so that it intersects with \cL($u$) and no other \cL.  
     To be sure there is room to do so, we first expand $\LLL$ rightward two units, and both upward and downward one unit, with respect to \cL($u$). We then place \cL($v$) with its corner one unit to the right and one unit below the corner of \cL($u$), giving it horizontal arm length 1 and vertical arm length 2.

Suppose instead that we derive $G'$ from $G$ by replacing a vertex $u$ with true twin vertices $v$ and $w$ so that $v$ and $w$ are adjacent to all the neighbors of $u$, and are also adjacent to one another. We construct $\LLL'$ representing $G'$ as follows.  Replace \cL($u$) with \cL($v$), so that \cL($v$) retains all the intersections of \cL($u$).  Now expand the drawing both rightward and downward one unit with respect to \cL($v$) to create room for \cL($w$).  We give \cL($w$) a vertical arm length that is one greater than that of \cL($v$), and a horizontal arm length one less than that of \cL($v$) after the rightward expansion, and place its corner one unit down and to the right of the corner of \cL($v$).  Thus,  \cL($v$) and \cL($w$) each intersect every \cL-segment that \cL($u$) intersected, and also intersect each other. 	
 
	If we construct $G'$ from $G$ by replacing a vertex $u$ with false twin vertices $v$ and $w$, we can proceed similarly. We first expand $\LLL$ leftward and downward one unit with respect to \cL($v$). Next, we place \cL($w$) one unit down and to the left of \cL($v$), and give \cL($w$) vertical and horizontal arm lengths one unit greater than those of \cL($v$).   Now \cL($w$) intersects the same \cL-segments that \cL($v$) intersects, but does not intersect \cL($v$) itself.		 
		%\qedhere
		\qed
	\end{proof}
	 
			If $G$ is a distance-hereditary graph, then $G$ can be built up from a single vertex by a sequence of the following three operations: 
        a) add a pendant vertex, b)
         replace any vertex with a pair of false twins, and c)
        replace any vertex with a pair of true twins. ~\cite{BANDELT1986182}  Thus, Theorem~\ref{theorem:distance-hereditary} immediately yields the following corollary.
	
	\begin{corollary}
		\label{dsh}
		Let $G$ be a distance-hereditary graph. Then $G$ is an  $\mathsf{L}$-graph.
	\end{corollary}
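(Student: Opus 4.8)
The plan is to prove the corollary by a short induction, combining the classical constructive characterization of distance-hereditary graphs due to Bandelt and Mulder~\cite{BANDELT1986182} with Theorem~\ref{theorem:distance-hereditary}. First I would recall that characterization explicitly, exactly as stated just above the corollary: a graph $G$ is distance-hereditary if and only if it can be obtained, starting from the one-vertex graph $K_1$, by a finite sequence of steps, each of which is one of (i) adding a pendant vertex, (ii) replacing some vertex by a pair of false twins, or (iii) replacing some vertex by a pair of true twins. So I would fix such a construction sequence $K_1 = G_1, G_2, \ldots, G_t = G$, where each $G_{i+1}$ is obtained from $G_i$ by one of these three operations.

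The induction is then on $i$, with the hypothesis that $G_i$ admits an \cLE. For the base case, $G_1 = K_1$ is trivially an \cLG: a single \cL-segment, with no crossings required, is already a valid (degenerate) \cLE, as noted before Lemma~\ref{non-jumping_drawing}. For the inductive step, assume $G_i$ has an \cLE. Each of the three operations taking $G_i$ to $G_{i+1}$ is precisely one of the operations covered by Theorem~\ref{theorem:distance-hereditary}: adding a pendant vertex is immediate, while ``replacing a vertex $u$ by a pair of twins'' is read as ``relabel $u$ as one twin and add the second twin adjacent to exactly the twin-appropriate set,'' which is how the true-twin and false-twin cases of Theorem~\ref{theorem:distance-hereditary} are phrased. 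Hence Theorem~\ref{theorem:distance-hereditary} produces an \cLE of $G_{i+1}$. Taking $i = t$ shows $G$ is an \cLG, completing the argument.

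The genuine content is already packaged in Theorem~\ref{theorem:distance-hereditary}, whose proof in turn relies on Lemma~\ref{lemma:expanding} to open up room next to a chosen \cL-segment before inserting the new one, so I do not expect a substantive obstacle here. The only point requiring a moment of care is a bookkeeping one: verifying that the Bandelt--Mulder operations coincide verbatim with the three operations handled by Theorem~\ref{theorem:distance-hereditary}, and that each construction step modifies $G_i$ only locally, leaving the induced \cLE of the remaining vertices untouched. With that observation the corollary follows immediately, and the whole proof is essentially the two sentences preceding its statement made into an explicit induction.
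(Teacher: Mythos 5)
Your proposal is correct and follows exactly the paper's own route: the paper likewise invokes the Bandelt--Mulder construction sequence (pendant vertex, false twins, true twins) and observes that Theorem~\ref{theorem:distance-hereditary} applied inductively to each step yields the corollary. You have merely made the implicit induction explicit, which is fine.
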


It is easy to see that $1$-leaf power graphs and $2$-leaf power graphs are \cL -graphs. We can use Theorem~\ref{non-jumping_3-leaf} and Lemma~\ref{non-jumping_drawing} to also show that $3$-leaf power graphs are   \cL -graphs. The proof of the following theorem on $4$-leaf power graphs is given in the Appendix.

\begin{theorem}
	\label{theorem:4-leaf-power}
 
Every $4$-leaf power graph admits an \cL-embedding. 
	
\end{theorem}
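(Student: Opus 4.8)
The plan is to start from the structure of $4$-leaf power graphs and reduce to pieces that we already know how to embed. By the known characterization of leaf powers, a $4$-leaf power graph $G$ arises from a tree $T$ whose leaves are the vertices of $G$, with two vertices adjacent iff their tree-distance is at most $4$. A convenient approach is to recall (e.g.\ from Brandst\"adt--Le type results) that the $4$-leaf powers are exactly the graphs obtained from trees by a bounded-clique-size substitution: each $4$-leaf power $G$ has a tree-like decomposition in which $G$ is built by ``blowing up'' the internal structure, and in particular $G$ decomposes along clique separators into pieces that are either (i) cliques or (ii) graphs of a very restricted ``caterpillar-like'' shape coming from a path in $T$ with pendant bunches of leaves. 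So the first step is to make this decomposition explicit: write $G$ along its clique-cutset decomposition, so that each leaf block of the decomposition corresponds to a subtree of $T$, and show it suffices to \cL-embed each block and then glue the embeddings along the shared clique, reusing the expansion operations of Lemma~\ref{lemma:expanding} to make room at the glue clique (exactly as in the pendant/twin constructions of Theorem~\ref{theorem:distance-hereditary}).

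The second step is to handle a single block. A block of a $4$-leaf power, rooted at the clique through which it attaches to the rest, is governed by a path $v_{0}v_{1}\cdots v_{t}$ in $T$ with, at each $v_{r}$, a set of pendant leaves and possibly a pendant ``cherry'' (a length-one path with two leaves). Adjacency at distance $\le 4$ then means: two leaves hanging at the same or at consecutive path-nodes are adjacent, leaves hanging at path-nodes two apart are adjacent only through short attachments, etc. The cleanest way forward is to show that such a block is in fact a \emph{non-jumping graph}: order the leaves by the position along the path $v_{0},\dots,v_{t}$ at which they hang (breaking ties arbitrarily within a bunch, and placing a cherry's two leaves consecutively), and verify the non-jumping condition ``$(v_i,v_k),(v_j,v_l)\in E$ and $i<j<k<l$ imply $(v_j,v_k)\in E$'' directly from the distance-$\le 4$ rule. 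If the block is non-jumping, then Lemma~\ref{non-jumping_drawing} immediately gives it a (monotone) \cL-embedding. This is the same strategy that worked for $3$-leaf powers via Theorem~\ref{non-jumping_3-leaf}, and the hope is that for $4$-leaf powers the ``distance $\le 4$'' constraint is still local enough along the path that no jumping quadruple can occur within a single block.

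The third step is the gluing. Having an \cL-embedding of each block with the vertices of the attaching clique appearing as a contiguous, mutually-intersecting family, I would order the blocks around the cutset clique and then, block by block, translate and expand (Lemma~\ref{lemma:expanding}) so that the copies of the clique vertices coincide and the non-clique parts of different blocks occupy disjoint horizontal/vertical strips and hence do not interfere; this is essentially an iterated version of the true-twin/pendant argument already used for distance-hereditary graphs.

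The main obstacle I expect is the second step: it is \emph{not} obvious that every block of a $4$-leaf power is non-jumping, because at distance $4$ one can have leaves that hang two path-nodes apart being non-adjacent while leaves straddling them are adjacent, which is exactly the shape of a jumping quadruple. If a block turns out to genuinely be a jumping graph, the fallback is to \cL-embed that block directly by hand: exploit the caterpillar-along-a-path structure to lay the \cL-corners along a path (not a line) that zig-zags, assigning to each bunch of pendant leaves at $v_{r}$ a short local staircase and routing the arms so that exactly the distance-$\le 4$ pairs meet. Either way, the global argument (decompose along clique cutsets, embed each block, glue with Lemma~\ref{lemma:expanding}) is the backbone; the delicate part is the per-block embedding, and getting the case analysis for pendant bunches versus pendant cherries at distance exactly $4$ correct is where the real work lies.
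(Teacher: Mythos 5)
Your proposal is a plan rather than a proof, and the gap sits exactly where the content of the theorem lies. You reduce everything to ``embed each block,'' then observe yourself that a block of a $4$-leaf power need not be non-jumping (indeed, the paper claims only plain \cL-embeddings for $4$-leaf powers, not monotone ones, precisely because the non-jumping route stops at $k=3$: two leaves hanging at path-nodes two apart may be non-adjacent while leaves straddling them are adjacent, which is a jumping quadruple). Your fallback for that case is an unspecified hand construction (``a short local staircase\ldots routing the arms so that exactly the distance-$\le 4$ pairs meet''), but that construction \emph{is} the theorem; until it is written down and verified against all combinations of pendant bunches and deeper attachments, nothing has been proved.

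Two further steps are also unsupported. First, the structural premise that a $4$-leaf power decomposes along clique cutsets into cliques and caterpillar-like pieces is asserted, not established, and is not the Brandst\"adt--Le characterization; the reduction that is actually available passes from a general $4$-leaf power to a \emph{basic} one by true-twin substitution, which the paper's Theorem~\ref{theorem:distance-hereditary} already handles, leaving one to embed basic $4$-leaf powers directly from the $4$-leaf root tree (e.g.\ via a DFS leaf ordering in the spirit of Theorem~\ref{non-jumping_3-leaf}). Second, the gluing step does not follow from Lemma~\ref{lemma:expanding}: that lemma stretches a single existing embedding along a slice; it does not merge two independently constructed embeddings while identifying a shared clique. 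Arranging for the clique's \cL-segments to simultaneously admit the attachments demanded by several blocks is a genuine geometric constraint, since a family of pairwise-crossing \cL-segments can be reached by a new segment only from limited directions; you would need an explicit composition lemma, not a citation to expansion.
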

% moving proof in appendix

\begin{comment}
\begin{theorem}
	\label{theorem:k-leaf-power}
%	$k$-leaf power graph has an L-intersection for $k<=4$.
	For every $k\leq 4$, a $k$-leaf power graph admits an \cL-embedding. \todo{not ready to read wait for Theorem~\ref{theorem:4-leaf-power} }
	
\end{theorem}

\begin{proof}
	Let $G$ be a $k$-leaf power graph. Let $T$ be a tree such that leaves of $T$ correspond to the vertices of $G$ in such a way that two vertices are adjacent in $G$ precisely when their distance in $T$ is at most $k$. Let choose a non-leaf vertex as a root. Using this root, let redraw tree $T$ as a rooted tree of level $l$ where root is at level 0. To draw L-intersection $D$ of $G$, leaves of $T$ are traversed from top to bottom and left to right.
	
	For $k=1$, no pair of leaves of $T$ are adjacent in $G$. $D$ can be drawn by drawing an L-segment at the positions of every leaves of $T$. For $k=2$, only siblings are adjacent to each other as they are at a distance 2 in $T$. Therefore, siblings can be drawn as fully connected L-intersection and other leaves can be drawn following the algorithm of drawing L-intersection of 1-leaf power.  
		\\...
	\qed
\end{proof}
\end{comment}
%%%%%%%%%%%%%% appended by Sammi_end

	\section{Conclusions and Future Work}
We have shown that several classes of graphs, such as distance-hereditary graphs and $k$-leaf power graphs for low values of $k$ are \cL -graphs. We have also provided a complete characterization of the more restricted variant of monotone \cL -graphs by correspondence with the class of non-jumping graphs. This type of graph has a combinatorial description, expressed as the existence of a specific type of linear order of its vertices. 

The results of our paper suggest several open problems: 
%  \begin{itemize}  \item 
What is the complexity of determining whether a given graph $G$ is a non-jumping? 
Are all planar graphs \cL -graphs?
Are $k$-leaf power graphs \cL -graphs for $k>4$? Our future work will investigate these questions.  
% \item Path, cycle, tree, outerplanar graph, complete graph, interval graph, 3-leaf power graphs are sub classes of non-jumping graphs. Find other classes of graph that are non-jumping graphs.
% \item Non-jumping labeling of graph gives a nice combinatorial orientation of graph vertices and edges that has applications in \cMLE. Find other application of this graph labeling. 

% \end{itemize}
% \todo{conjecture}
%chordal graphs are $\mathsf{L}$-graphs
	\bibliographystyle{splncs03}
%	\bibliography{intersection}

\newpage		\section *{Appendix}
	 
\textbf{Proof of Theorem~\ref{jumping_graph}}\\

\textbf{Theorem.} \textit{Not all graphs are non-jumping graphs.}\\

We prove this theorem by showing that the graph depicted in Fig.~\ref{fig:njumpingex}(d) is a non-jumping graph. We tested this graph with a computer program finding that all the possible $8!$ labeling are jumping. In this mathematical proof we use patterns that can occur in the labeling process to show that however a labeling is chosen, it is jumping.
%In the following we say that a labeling of the vertices is unfeasible if it generates a jumping representation. 
\begin{wrapfigure}{R}{0.4\textwidth}
%\begin{center}
	%\begin{figure}[!htbp]
	\hfil \includegraphics[width=0.4\textwidth]{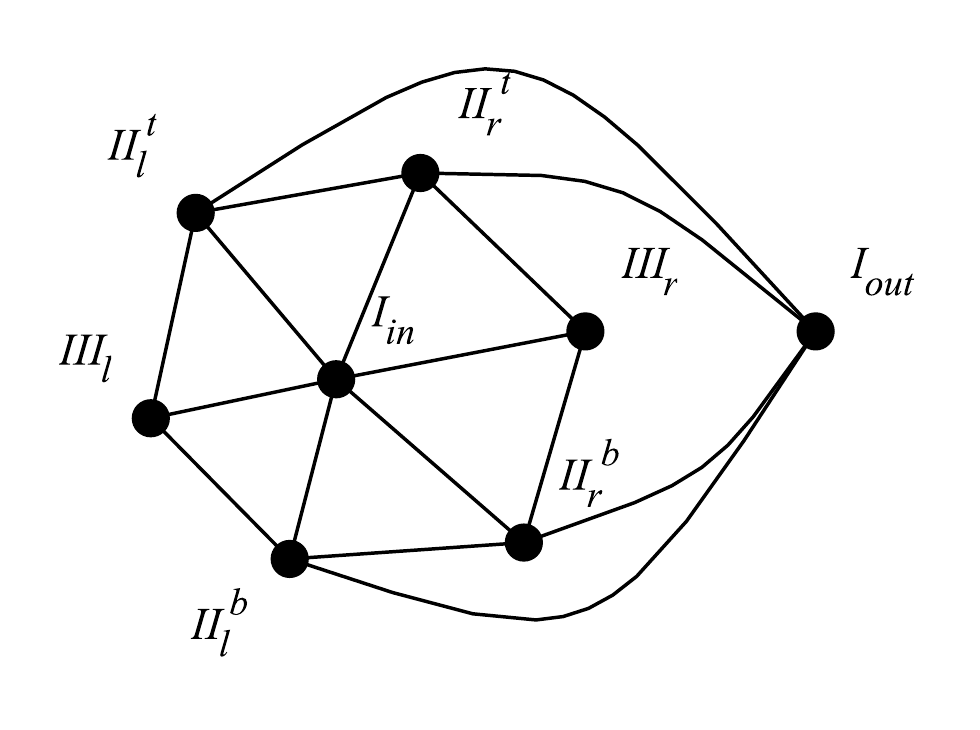} \hfil
	\caption{Notation for the vertices of the graph used in the proof of Theorem~\ref{jumping_graph}.}
	\label{fig:jumpingnotation}
%\end{figure}
%\end{center}
\end{wrapfigure}

Before beginning the formal proof, we define the notation depicted in Fig.~\ref{fig:jumpingnotation}, to make easier the identification of jumping patterns.  Specifically, we provide each vertex with a name $X_{y}^{z}$ where $X \in \{I, II, III\}$ depending on its connection with other vertices: each $II$ is connected with two $I$'s, one $II$, and one $III$; each $III$ is connected with one $I$ and two $II$'s; and the $I$'s are the remaining vertices.  Next, we have $y \in \{l, r\}$ if $X \in \{II, III\}$  and $y \in \{in, out\}$ if $X = \{I\}$, such that $III_{l}$ ($III_{r}$) is connected with two $II_{l}$'s ($II_{r}$'s). We also have  $I_{in}$ connected with each $II$ and each $III$, while $I_{out}$ is connected with each $II$, and $I_{in}$ and $I_{out}$ are not connected. Finally, we have $z \in \{t, b\}$ indicating whether two $II$'s are connected \textemdash i.e., two $II$'s are adjacent iff they have the same value of $z$. 

To simplify our proof, in the following we use $h, k \in \{l, r\}$ where $h \ne k$ if not specified, and $i, j \in \{t, b\}$ where $i \ne j$ if not specified. Hence, for two vertices $II^i_h$ and $II^j_k$ we have:

\begin{itemize}
	\item $h=k \Leftrightarrow$ the vertices are adjacent to the same $III$.
	\item $i=j \Leftrightarrow$ the vertices are adjacent
\end{itemize}

To prove that every possible labeling is jumping, we first temporary remove the two vertices of type $I$. Their removal produces a cycle composed of six vertices.\\
Observe that a cycle has a non-jumping representation if we fix the label of one vertex and the other vertices are labeled in sequence, or if the labels of two adjacent vertices are swapped, i.e., the labels of any pair of adjacent vertices are at distance of at most two.
Thus, for our cycle the feasible (i.e., non-jumping) sequences are:
\begin{itemize}\label{li:cyclesq}
	\item $III_{h}$ $II_{h}^{i}$ $II_{k}^{i}$ $III_{k}$ $II_{k}^{j}$ $II_{h}^{j}$,
	\item $II_{h}^{j}$ $III_{h}$ $II_{h}^{i}$ $II_{k}^{i}$ $III_{k}$ $II_{k}^{j}$,
	\item $II_{k}^{j}$ $II_{h}^{j}$ $III_{h}$ $II_{h}^{i}$ $II_{k}^{i}$ $III_{k}$,
\end{itemize}
as well as all sequences obtained by permuting exactly two consecutive vertices. (For the sake of brevity, in the following we consider only the sequences without such a permutation. However, this proof can easily be extended to accommodate the permuted sequences.)

Before adding the two $I$'s to the sequences given above,
%List~\ref{li:cyclesq} \todo{The list is not given a label in the printed copy, so this reference is confusing} 
we consider the following infeasible, i.e. jumping, configuration. Note that we use the notation $(\dots)$ to signify to any vertex or sequence of vertices:
\begin{equation}
(\dots) II (\dots)  I (\dots) I  (\dots) II (\dots) 
\label{eq:twoismiddle}
\end{equation}
In general, the two $I$'s cannot be placed between any two pair of vertices of type $II$, since the $I$'s are not connected by an edge, but must be connected to every $II$. \\

From the infeasible configuration~\ref{eq:twoismiddle}, it follows that at least one $I$ should be placed to the left (or right) of all the $II$'s. Without loss of generality, we consider only placing $I$ to the left of the $II$'s. After placing one $I$ in this way, we have the following possibilities:
\begin{itemize}\label{li:=OneIsq}
	\item $I$ $III_{h}$ $II_{h}^{i}$ $II_{k}^{i}$ $III_{k}$ $II_{k}^{j}$ $II_{h}^{j}$
	\item $III_{h}$ $I$ $II_{h}^{i}$ $II_{k}^{i}$ $III_{k}$ $II_{k}^{j}$ $II_{h}^{j}$
	\item $I$ $II_{h}^{j}$ $III_{h}$ $II_{h}^{i}$ $II_{k}^{i}$ $III_{k}$ $II_{k}^{j}$
	\item $I$ $II_{k}^{j}$ $II_{h}^{j}$ $III_{h}$ $II_{h}^{i}$ $II_{k}^{i}$ $III_{k}$
\end{itemize}

We now observe that two nonadjacent $II$'s cannot be placed between the two $I$'s, that is,
\begin{equation}
(\dots) I (\dots)  II^{h} (\dots)  II^{k}  (\dots) I (\dots) 
\label{eq:twoinside}
\end{equation}
is unfeasible if $h \ne k$ is infeasible. This is because there is no an edge between $II^h$ and $II^k$, while there must be an edge between each $I$ and each $II$.  From this infeasible configuration, it follows that between the two $I$'s there can be only zero, one, or two $II$'s.

Because $III_k$ and $II_h$ do not share an edge, the sequence
\begin{equation}
(\dots) I (\dots)  III_k (\dots)   II_h (\dots)   II_k (\dots) 
\label{eq:iiikiih}
\end{equation}
is infeasible, and because $II^i_k$ and $II^j_k$ do not share an edge, we also have
\begin{equation}
(\dots) I (\dots) II_k (\dots) II_k (\dots) III_{k} (\dots)
\label{eq:iiikiihiiik}
\end{equation}
is infeasible. As such, we can eliminate the first and the last sequences in the previous list.  

The following list reports all the possible sequences with both $I$'s. Here, the available positions for the second $I$ that remain after considering the previous infeasible configurations are shown inside parentheses:
\begin{itemize}\label{li:=TwoIsq}
	\item $III_{h}$ $I$  ($I$)  $II_{h}^{i}$ ($I$) $II_{k}^{i}$ ($I$) $III_{k}$ ($I$) $II_{k}^{j}$ $II_{h}^{j}$
	\item $I$ ($I$) $II_{h}^{j}$ ($I$) $III_{h}$ ($I$) $II_{h}^{i}$ $II_{k}^{i}$ $III_{k}$ $II_{k}^{j}$
	%\item $I$ ($I$) $II_{k}^{j}$ ($I$) $II_{h}^{j}$ ($I$) $III_{h}$ ($I$) $II_{h}^{i}$ $II_{k}^{i}$ $III_{k}$
\end{itemize}

Now, if we identify the two $I$'s as $I_{in}$ and $I_{out}$, we find that the following are infeasible configurations:

\begin{equation}
(\dots) I_{in} (\dots) I_{out} (\dots) III (\dots) II (\dots)
\label{eq:inoutiiiii}
\end{equation}
because $I_{out}$ and $III$ do not share and edge, and

\begin{equation}
(\dots) II (\dots) III_{h} (\dots) I_{out} (\dots) II_{h} (\dots)
\label{eq:iiiiihioutii}
\end{equation}
because no $III$ and $I_{out}$ share an edge.  Thus, $II$ cannot be followed by a $III$, a $I_{out}$ and the adjacent $II$ of the previous $III$.

We observe that the previous configurations can occur in any of the two sequences. So the leftmost $I$ can only be $I_{out}$:
\begin{itemize}\label{li:=INOUT}
	\item $III_{h}$ $I_{out}$  ($I_{in}$)  $II_{h}^{i}$ ($I_{in}$) $II_{k}^{i}$ ($I_{in}$) $III_{k}$ ($I_{in}$) $II_{k}^{j}$ $II_{h}^{j}$
	\item $I_{out}$ ($I_{in}$) $II_{h}^{j}$ ($I_{in}$) $III_{h}$ ($I_{in}$) $II_{h}^{i}$ ($I_{in}$) $II_{k}^{i}$ $III_{k}$ $II_{k}^{j}$
	%\item $I_{out}$ ($I_{in}$) $II_{k}^{j}$ ($I_{in}$) $II_{h}^{j}$ ($I_{in}$) $III_{h}$ ($I_{in}$) $II_{h}^{i}$ $II_{k}^{i}$ $III_{k}$
\end{itemize}

We finally conclude that all the remaining sequences are jumping since they each contain at least one of the infeasible configurations:

\begin{equation}
(\dots) III (\dots) I_{out} (\dots) I_{in} (\dots) II (\dots)
\label{eq:iiioutinii}
\end{equation}
because $I_{out}$ and $I_{in}$ do not share an edge, and

\begin{equation}
(\dots) I (\dots) II_k^{i} (\dots) II_h^{i} (\dots) III_{k} (\dots)
\label{eq:iiikiihiiik}
\end{equation}
because $II_k^{i}$ and $II_h^{i}$ do not share an edge

It follows that the graph does not have a non-jumping labeling. As mentioned, this proof can easily be extended to any exchange of an adjacent pair of vertices of the cycle.\\
\\
{\bf  Proof of Theorem ~\ref{theorem:4-leaf-power}:}\\
     
\textbf{Theorem.} \textit{Every $4$-leaf power graph admits an \cL-embedding. }
 
\begin{figure}[!htbp]
	\hfil \vspace{-.6cm}\includegraphics[width=1\textwidth]{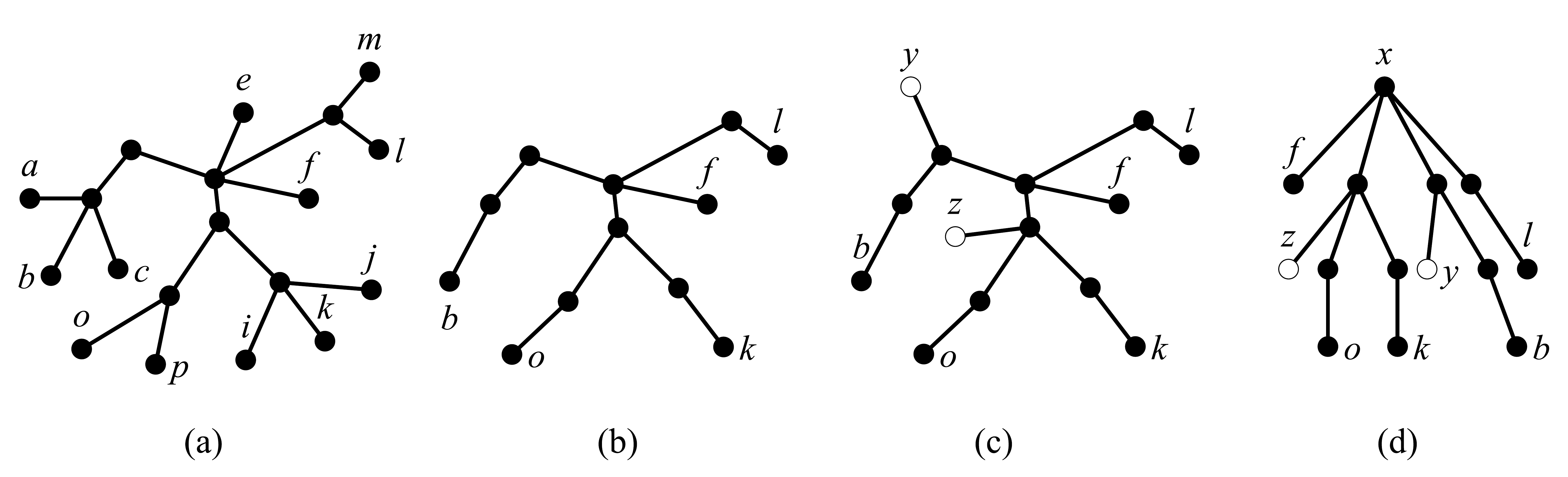}\vspace{-.2cm} \hfil
	\caption{(a) A tree $T$, which is modified by (b) removing multiple siblings, and (c) adding dummy leaves are to the internal vertices. (d) A rooted simplified leaf-tree of $T$.}
	\label{fig:4-leaf-tree-processing}
\end{figure}

Before we begin our proof, we first define some notation and terminology.  Let $G=(V,E)$ be a 4-leaf power graph. Then there is a tree $T'$ whose leaves correspond to the vertices of $G$ in such a way that two vertices are adjacent in $G$ precisely when their distance in $T'$ is at most 4. Let $v$ be a vertex of $G$. We denote the graph obtained by removing $v$ and all edges incident to $v$ by $G-v$. 

To make $T'$ as a simpler and more uniform tree, we remove and add some leaves on $T'$ as follows.  We first remove siblings leaves from $T'$, i.e., leaves that have the same parent,  and add new leaves to each internal node without any child-leaf to create a \textit{rooted simplified leaf-tree} $T$  of $G$ from $T'$, as follows. Let $u$ and $v$ be two siblings leaves in $T'$. Then $N(u)=N(v)$ in $G$ and $(u,v) \in E$. Hence, $u$ is a true twin of $v$ in $G$. According to Theorem~\ref{theorem:distance-hereditary}, if we have \cL($G-v$) then we have \cL($G$). For each group of sibling leaves, we keep exactly one leaf, removing the others. For example, Figure ~\ref{fig:4-leaf-tree-processing}(b) shows the transformed tree after removing multiple siblings from the tree shown in  Fig.~\ref{fig:4-leaf-tree-processing}(a). We now add a dummy leaf to the internal vertices of $T$ that do not have a child-leaf. In Figure~\ref{fig:4-leaf-tree-processing}(c), vertices $y$ and $z$ are dummy leaves. The dummy vertices will be removed from \cL($G$) after the construction of \cLE of  $G$. We make $T$ a rooted tree by selecting an arbitrary internal vertex as its root (see Fig.~\ref{fig:4-leaf-tree-processing}(d)). Observe that every internal vertex has exactly one leaf in the rooted simplified leaf-tree. 
     \begin{wrapfigure}[14]{R}{0.5\textwidth}
	\hfil \includegraphics[width=0.4\textwidth]{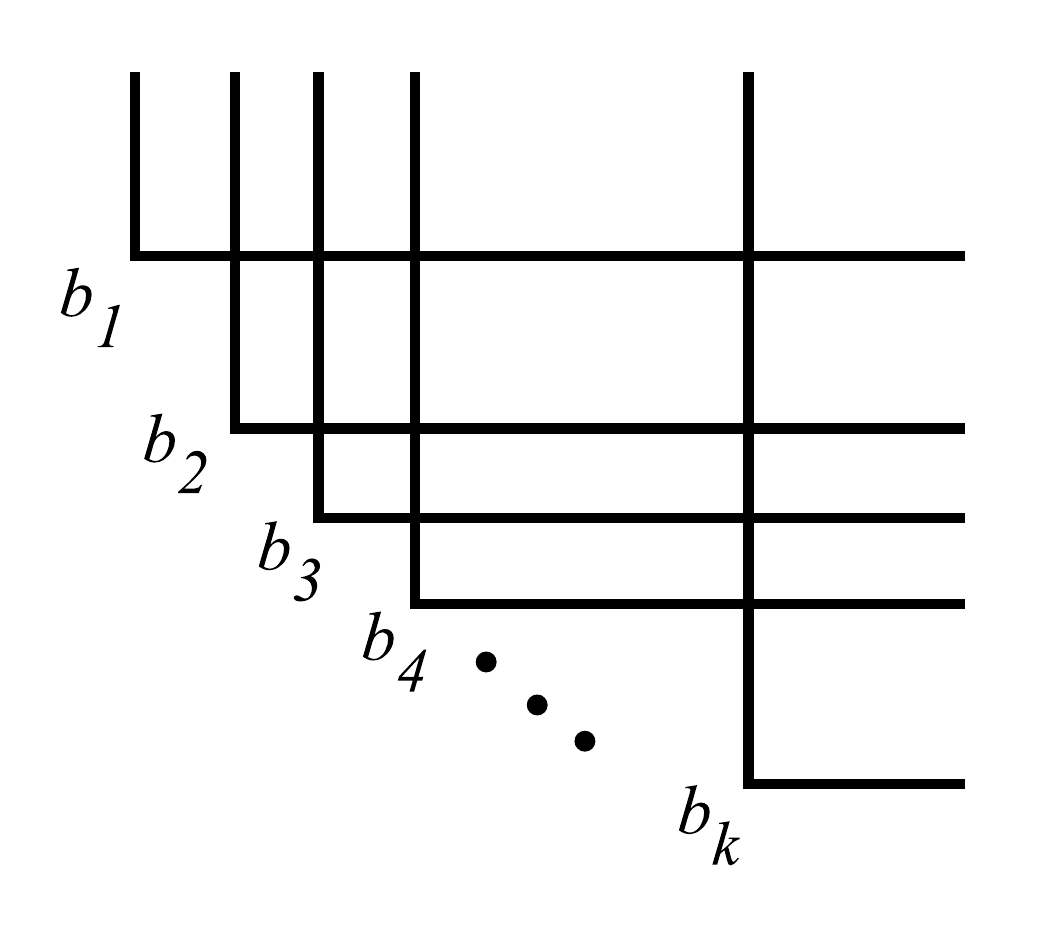} \hfil
	\caption{A fully connected \cLE}
	\label{fig:fullconnected}
\end{wrapfigure}
   
	Let $v$ be a vertex  of $T$. We denote the subtree rooted at $v$ by $T_v$. We denote the parent of $v$ by $v'$,  and the  parent of  $v'$ by  $v''$. Vertices $u$ and $v$ are said to be {\it siblings} if $u' = v'$. A vertex $u$ is said to be  an {\it uncle} of $v$ if $u'=v''$. A vertex $u$ is said to be a {\it p-uncle} of $v$ if $u'$ is parent of $v''$. Vertices $u$ and $v$ are said to be {\it cousins} if $u''=v''$. A vertex $u$ is said to be  a {\it nephew} of $v$ if $u''=v'$.
 The length of the largest path from the root of $T$ to any leaf is called {\it depth } of $T$. 
    
    Since $G$ is a 4-leaf power graph, then $(u,v) \in E$ if and only if one of the following is true:
    \begin{enumerate}
    \item $u$ and $v$ are siblings,
    \item $u$ and $v$ are cousins,
    \item $u$ is an uncle of $v$ or  $v$ is an uncle of $u$, or
    \item $u$ is a p-uncle of $v$ or $v$ is a p-uncle of $u$.
    \end{enumerate}
    By definition, a leaf $u$ has exactly one uncle and exactly one p-uncle. Also, cousins have a common uncle and a common p-uncle.

%\begin{figure}[!htbp]
%	\hfil \includegraphics[width=0.4\textwidth]{fullconnected-lembeding.eps} \hfil
%	\caption{fully connected \cLE}
%	\label{fig:fullconnected}
%\end{figure}

We now draw each group of cousins as a \textit{ fully connected \cLE} in which all of the \cL-segments intersect one another (see Fig.~\ref{fig:fullconnected}). 

  Let $r$ be the root of $T$.  Since $T$ is  a rooted simplified leaf-tree, $r$ has exactly one child-leaf, which we call $r_c$.  We denote a nephew of $r_c$ by $r_{cc}$, and the subtree induced by the vertices $r$, $r_c$, and $r_{cc}$ by $T_{conf}$. 

We maintain two configurations, a "\textit{Rectangle-configuration}" and an "\textit{L-configuration}," for the drawing of leaves of $T_{conf}$ of $T$.  The properties of a {\it Rectangle-configuration} are:
 \begin{description}
 \item [RconPro1] The horizontal arms of cousins intersect the vertical arm of their uncle, maintaining a subdivided L-shaped free region for each cousin
 \item [RconPro2] The vertical part of each subdivided L-shaped free region is visible from the horizontal arm of corresponding cousin
  \item [RconPro3] The horizontal part of every subdivided L-shaped free region is visible from the vertical arm of the uncle
 \end{description}
 
 The properties of an {\it L-configuration} are the following:
 \begin{description}
 \item [LconPro1] The vertical segments of cousins intersect the horizontal segment of their uncle, maintaining a subdivided rectangle shape-free region  for each cousin
 \item [LconPro2] The right part of each subdivided rectangle shape-free region is visible from the vertical arm of its corresponding cousin
  \item [LconPro3] The top part of every subdivided rectangle shape-free region is visible from the horizontal arm of the uncle
 \end{description}
 
Figure~\ref{fig:confs}(f) depicts a Rectangle-configuration for the tree drawn in  Fig.~\ref{fig:confs}(d), and Fig.~\ref{fig:confs}(e) depicts an L-configuration for the tree drawn in  Fig.~\ref{fig:confs}(d).

     We need the following lemma for the proof of Theorem~\ref{theorem:4-leaf-power} 
      
 \begin{lemma}
 \label{4-leaf-lemma}
    Let $G$ be a 4-leaf power graph and  $T$ be a corresponding rooted simplified leaf-tree of $G$.  Then $T_{conf}$ admits a Rectangle-configuration and an L-configuration.
 
 \end{lemma}

 \begin{proof}
 Let $r$ be the root of $T$. 
 Assume that $T_{conf}$ of $T$ contains $k+1$ leaves. These are $r_c$ and $k$  nephews of $r_c$ (for some $k\ge 1$); the $k$ nephews are cousins of each other. We find a Rectangle-configuration  of $T_{conf}$ as follows. We take a fully connected \cLE of the $k$ cousins and add \cL($r_c$) such that the horizontal arms of the $k$ cousins intersect the vertical arm of \cL($r_c$). It is easy to verify that the properties of the Rectangle-configuration hold (see Fig.~\ref{fig:confs}(b)).  
     Similarly, we find L-configuration  of $T$ as follows. We take a fully connected \cLE of $k$ cousins and add \cL($r_c$) such that the vertical arms of the $k$ cousins intersect the horizontal segment of \cL($r_c$). This drawing maintains the properties of L-configuration of $T$ (see Fig.~\ref{fig:confs}(c)). 
 \end{proof}

\begin{figure}[t]
	\hfil \vspace{-.4cm}\includegraphics[width=.95\textwidth]{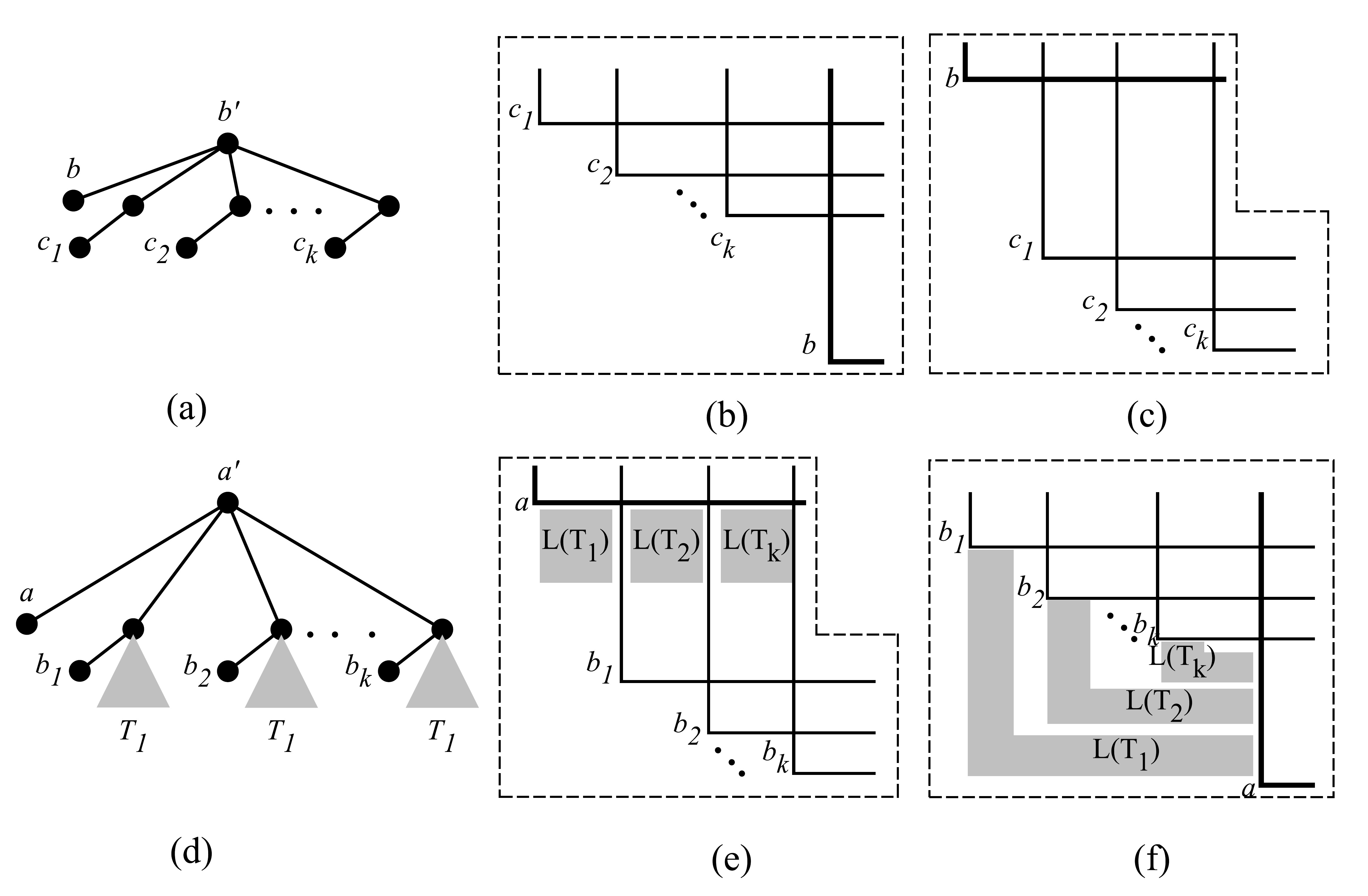}\vspace{-.2cm} \hfil
	\caption{(a) The tree $T$ used for the base case, (b) a Rectangle-configuration of $T$, and (c) an L-configuration of $T$. (d) The tree $T_{a'}$ used in the induction step, (e) an L-configuration of $T_{a'}$, and (f) a Rectangle-configuration of $T_{a'}$. }
	\label{fig:confs}
\end{figure}

 \begin{proof}  
    We now begin a proof of the main theorem by induction.
     Let $G_\rho$ be a 4-leaf power graph and  $T$ be a corresponding rooted simplified leaf-tree of $G_\rho$ with depth $\rho$. Let $r$ be the root of $T$.
We claim that $G_\rho$ admits two \cLE such that   $T_{conf}$   admits the Rectangle-configuration in one \cLE  and the L-configuration in the other \cLE.  Our induction is based on depth of $T$.  
     
     \textbf{Base case}: The depth $\rho=2$. Let $G_2$ be a 4-leaf power graph and  $T$  be a corresponding rooted simplified leaf-tree of $G_2$ with depth $2$.  Since $T_{conf}=T$, by Lemma~\ref{4-leaf-lemma} $G_2$ admits two \cL -embeddings such that   $T_{conf}$   admits the Rectangle-configuration in one \cLE  and the L-configuration in the other \cLE (see Fig.~\ref{fig:confs}(a)-(b)).
     
     \textbf{Induction case}: We now assume that our claim holds for any depth $\rho-1$. For the inductive step, we have to prove that it holds for depth $\rho$.
     
     Let $G_\rho$ be a 4-leaf power graph and  $T$  be a corresponding rooted simplified leaf-tree with depth $\rho$ of $G_\rho$. Let $r$ be the root of $T$ (Fig.~\ref{fig:confs}(a)). By Lemma~\ref{4-leaf-lemma}, the drawing of $T_{conf}$  maintains the properties of a Rectangle-configuration, but we need to show how to place \cL($T_{r_{cc}'}$) into  the L-shaped free region between \cL($r_{cc}$) and \cL($r_{c}$). (Any \cLE consists of horizontal and vertical line segments so width or height or both of the drawing can be resized.)
     
 Let $\rho'$ be the depth of $T_{r_{cc}'}$. Since $\rho'\le \rho-1$, by the inductive hypothesis, $G_{\rho'}$ has an L-configuration with respect to the root $r_{cc}$ of  $T_{r_{cc}'}$ with properties LconPro2 and LconPro3. Note that $r_{cc}$ of $T$ and ${r_{cc}'}_c$ of  $T_{r_{cc}'}$ are the same  vertices in $T$. We therefore place the corner of \cL(${r_{cc}'}_c$) on the corner of \cL$(r_{cc})$ to make one \cL-segment.  We know $r_{c}$ is the p-uncle of each ${r_{cc}'}_{cc}$. Hence \cL($r_{c}$) and \cL(${r_{cc}'}_{cc}$) should cross. The crossing between the nephews of $r_{cc}$ in  $T_{r_{cc}'}$ with $r_c$ in $T$ can be achieved by extending the horizontal arm of  \cL(${r_{cc}'}_{cc}$) because of LconPro2. Let $a$ and $b$ be two nephews of $r_c$. Then the distance between any leaf in $T_a$ and any leaf in $T_b$  is greater than 4, except for $a$ and $b$. Thus  $G_\rho$ admits an \cLE such that   $T_{conf}$   admits the Rectangle-configuration.

 We now show that $G_\rho$ admits an \cLE such that   $T_{conf}$ admits the \sloppy{L-configuration}.
      By Lemma~\ref{4-leaf-lemma}, the drawing of $T_{conf}$  maintains the properties of L-configuration  but we need to show how to  place \cL($T_{r_{cc}'}$) into  the L-shaped free region between \cL($r_{cc}$) and \cL($r_{c}$).  
     
 Let $\rho'$ be the depth of $T_{r_{cc}'}$. Since $\rho'\le \rho-1$,  by the inductive hypothesis we know that $G_{\rho''}$ has an Rectangle-configuration with respect to the root $r_{cc}$ of  $T_{r_{cc}'}$ with properties RconPro2 and RconPro3. Note that $r_{cc}$ of $T$ and ${r_{cc}'}_c$ of  $T_{r_{cc}'}$ are the same vertex. We place the corner of \cL(${r_{cc}'}_c$) on the corner of \cL$(r_{cc})$ to make one \cL-segment. We know $r_{c}$ is the p-uncle of each ${r_{cc}'}_{cc}$. Hence \cL($r_{c}$) and \cL(${r_{cc}'}_{cc}$) should cross. The crossing between the nephews of $r_{cc}$ in  $T_{r_{cc}'}$ to $r_c$ in $T$ can be done by extending vertical line segment of  \cL(${r_{cc}'}_{cc}$) because of LconPro2. Let $a$ and $b$ be two nephews of $r_c$. Then the distance between any leaf in $T_a$ and   any leaf in $T_b$ is greater than 4, except for $a$ and $b$. \qed
\end{proof}

 % \textbf{Figure for Theorem~\ref{non-jumping_3-leaf}}  
    
	 \begin{figure}[t]
 \vspace{-.4cm}\includegraphics[width=.8\textwidth]{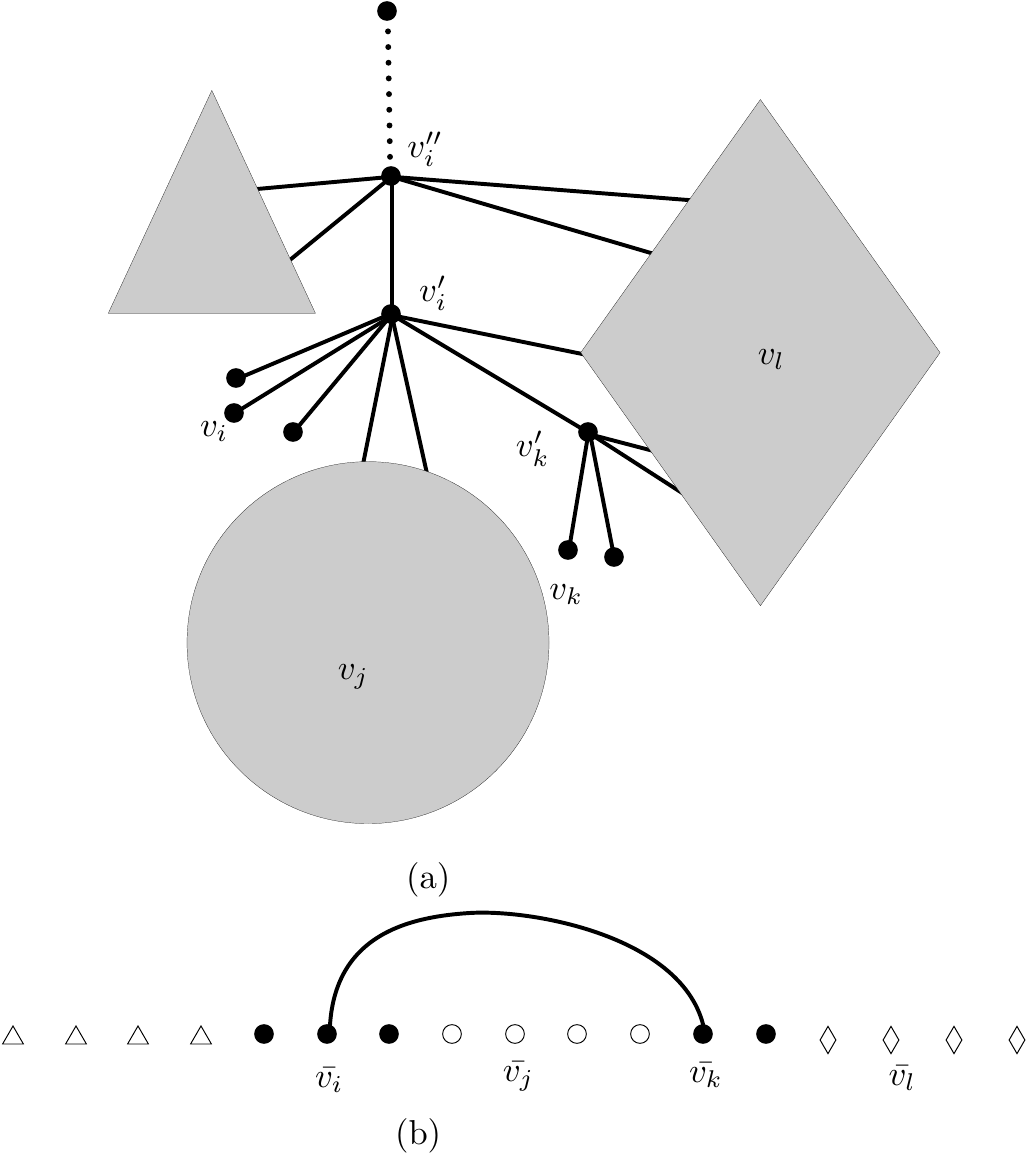} \hfil
	 	\caption{ Illustration for the proof of Theorem~\ref{non-jumping_3-leaf}.}
	 	\label{fig:3leaf-non-jumping-proof}
	 \end{figure}\vfill

\end{document}